\newtheorem{theorem}{Theorem}
\newtheorem{proposition}[theorem]{Proposition}
\newtheorem{lemma}[theorem]{Lemma}
\newtheorem{cor}[theorem]{Corollary}
\def\beq{\begin{equation}}
\def\eeq{\end{equation}}
\def\bea{\begin{eqnarray}}
\def\eea{\end{eqnarray}}
\def\benpf{\noindent {\textbf{{\emph{Proof.}}\;}}}
\def\endpf{\hfill$\blacksquare$\medskip}
\let\expandafter
\def\subeqnarray{\arraycolsep1pt
   \def\@eqnnum\stepcounter##1{\stepcounter{subequation}
       {\reset@font\rm(\theequation\alph{subequation})}}
\jot5mm     \eqnarray}
\newcommand{\bbR}{{\mathbb R}}
\def\ep{\varepsilon}
\def\epsilon{\varepsilon}
\def\t{\widetilde}
\def\pa{\partial}
\def\endpf{\hfill$\square$\medskip}
\newbox\meibox
\def\placeunder#1#2#3#4{\setbox\meibox%
\vbox{\hbox{\hskip#4$\hphantom{#2}$}\hbox{$\hphantom{#1}$}}%
\vtop{\baselineskip=0pt\lineskiplimit=\baselineskip%
\lineskip=#3\hbox to \wd\meibox{\hfil\hskip#4$#2$\hfil}%
\hbox to \wd\meibox{\hfil$#1$\hfil}}}
\def\intprod{\mathbin{\hbox to 6pt{%
                 \vrule height0.4pt width5pt depth0pt
                 \kern-.4pt
                 \vrule height6pt width0.4pt depth0pt\hss}}}
\begin{document}
\title[Commuting pairs of integrable symplectic birational maps of $\mathbb R^4$]
{A construction of a large family of commuting pairs \\ of integrable symplectic birational 4-dimensional maps}

\author{Matteo Petrera \and Yuri B. Suris }

\thanks{E-mail: {\tt  petrera@math.tu-berlin.de, suris@math.tu-berlin.de}}

\maketitle

\begin{center}
{\footnotesize{
Institut f\"ur Mathematik, MA 7-2\\
Technische Universit\"at Berlin, Str. des 17. Juni 136,
10623 Berlin, Germany
}}
\end{center}


\begin{abstract}
We give a construction of completely integrable 4-dimensional Hamiltonian systems with cubic Hamilton functions. 
Applying to the corresponding pairs of commuting quadratic Hamiltonian vector fields the so called Kahan-Hirota-Kimura discretization scheme, we arrive at pairs of birational 4-dimensional maps. We show that these maps are symplectic with respect to a symplectic structure that is a perturbation of the standard symplectic structure on $\mathbb R^4$, and possess two independent integrals of motion, which are perturbations of the original Hamilton functions. Thus, these maps are completely integrable in the Liouville-Arnold sense. Moreover, under a suitable normalization of the original pairs of vector fields, the pairs of maps commute and share the invariant symplectic structure and the two integrals of motion.
\end{abstract}

\section{Introduction}
\label{sect intro}

The theory of integrable systems has a long record of fruitful interactions with various branches of mathematics, most prominently with algebraic geometry. A bright example is an in-depth study of the geometry of elliptic rational surfaces and their automorphisms by Duistermaat \cite{Dui}, following the discovery of a large QRT family of integrable birational 2-dimensional maps by Quispel, Roberts and Thompson \cite{QRT}.  

The goal of this paper is to introduce a large family of integrable 4-dimensional maps, along with their plenty of remarkable properties. We hope that the study of the algebraic geometry of these maps will turn out to be as fruitful as the one in \cite{Dui}. This family comes as a new instance in a long and still mysterious list of features of the so called Kahan-Hirota-Kimura discretization method for quadratic vector fields.

This method was introduced in the geometric integration literature by Kahan in the unpublished notes \cite{K} as a method applicable to any system of ordinary differential equations for $x:\bbR\to\bbR^n$ with a quadratic vector field:
\begin{equation}\label{eq: diff eq gen}
\dot{x}=f(x)=Q(x)+Bx+c,
\end{equation}
where each component of $Q:\bbR^n\to\bbR^n$ is a quadratic form, while $B\in{\rm Mat}_{n\times n}(\bbR)$ and $c\in\bbR^n$. Kahan's discretization (with stepsize $2\epsilon$) reads as
\begin{equation}\label{eq: Kahan gen}
\frac{\widetilde{x}-x}{2\epsilon}=Q(x,\widetilde{x})+\frac{1}{2}B(x+\widetilde{x})+c,
\end{equation}
where
\[
Q(x,\widetilde{x})=\frac{1}{2}\big(Q(x+\widetilde{x})-Q(x)-Q(\widetilde{x})\big)
\]
is the symmetric bilinear form corresponding to the quadratic form $Q$. Equation (\ref{eq: Kahan gen}) is {\em linear} with respect to $\widetilde x$ and therefore defines a {\em rational} map $\widetilde{x}=\Phi_f(x,\epsilon)$. Clearly, this map approximates the time $2\epsilon$ shift along the solutions of the original differential system. Since equation (\ref{eq: Kahan gen}) remains invariant under the interchange $x\leftrightarrow\widetilde{x}$ with the simultaneous sign inversion $\epsilon\mapsto-\epsilon$, one has the {\em reversibility} property
\begin{equation}\label{eq: reversible}
\Phi_f^{-1}(x,\epsilon)=\Phi_f(x,-\epsilon).
\end{equation}
In particular, the map $f$ is {\em birational}. 
Kahan applied this discretization scheme to the famous Lotka-Volterra system and showed that in this case it possesses a very remarkable non-spiralling property. This property was explained by Sanz-Serna \cite{SS} by demonstrating that in this case the numerical method preserves an invariant Poisson structure of the original system.

The next intriguing appearance of this discretization was in the two papers by Hirota and Kimura who (being apparently unaware of the work by Kahan) applied it to two famous {\em integrable} system of classical mechanics, the Euler top and the Lagrange top \cite{HK, KH}. Surprisingly, the discretization scheme produced in both cases {\em integrable} maps. 

In \cite{PS, PPS1, PPS2} the authors undertook an extensive study of the properties of the Kahan's method when applied to integrable systems (we proposed to use in the integrable context the term ``Hirota-Kimura method''). It was demonstrated that, in an amazing number of cases, the method preserves integrability in the sense that the map $\Phi_f(x,\epsilon)$ possesses as many independent integrals of motion as the original system $\dot x=f(x)$.

Further remarkable geometric properties of the Kahan's method were discovered by Celledoni, McLachlan, Owren and Quispel in \cite{CMOQ1}, see also \cite{CMOQ2, CMOQ3}.  They demonstrated that for an arbitrary Hamiltonian vector field $f(x)=J\nabla H(x)$ with a constant Posson tensor $J$ and a cubic Hamilton function $H(x)$ the map $\Phi_f(x,\epsilon)$ possesses a rational integral of motion $\t H(x,\epsilon)$ such that $\t H(x,0)=H(x)$, and an invariant measure with a rational density, which is a small perturbation of the phase volume $dx_1\wedge \ldots \wedge dx_n$ as $\epsilon\to 0$. It should be mentioned that, while for $n=2$ the existence of an invariant measure is equivalent to symplecticity of the map $\Phi_f(x,\epsilon)$, the latter property was not established for any quadratic Hamiltonian system  in dimension $n\ge 4$.

In the present paper we give a construction of a big family of completely integrable Hamiltonian systems in dimension $n=4$ for which the Kahan-Hirota-Kimura discretization possesses a whole series of novel features. 

The main set of parameters of the construction is encoded in a $4\times 4$ matrix 
$$
A=\begin{pmatrix}
a_1 & a_2 & 0 &-a_5 \\
a_3 & -a_1 & a_5 & 0 \\
0& a_6 & a_1 &a_3\\
-a_6 &0 & a_2 & -a_1
\end{pmatrix}.
$$
Such matrices form a {\em 5-dimensional} vector space. To each non-degenerate matrix from this space, there corresponds a {\em 8-dimensional} vector space of cubic polynomials $H(x)$ on $\mathbb R^4$, satisfying a certain system of second order linear PDEs, encoded in the matrix equation 
\beq \label{harm}
A(\nabla^2 H)=(\nabla^2 H) A^{\rm T},
\end{equation}
 where $\nabla^2 H$ is the Hesse matrix of the function $H$. To each such polynomial  $H(x)$ there corresponds a {\em unique} ``dual'' polynomial $K(x)$ from the same 8-dimensional vector space, characterized by 
\begin{equation}\label{CR}
\nabla K(x)=A\nabla H(x).
\end{equation}
One can think of equation \eqref{harm} as a matrix analog of the Laplace equation for harmonic functions, and of equation \eqref{CR} as an analog of the Cauchy-Riemann equations relating conjugate pairs of harmonic functions.
It turns out that the functions $H(x)$ and $K(x)$ are functionally independent and are in involution with respect to the standard symplectic structure on $\mathbb R^4$. Therefore, each one defines a completely integrable Hamiltonian system. The flows of the Hamiltonian vector fields $J\nabla H(x)$ and $J\nabla K(x)$ commute. The following are the striking properties of the corresponding Kahan-Hirota-Kimura discretizations. 
\begin{itemize}
\item The map $\Phi_{J\nabla H}$ is symplectic with respect to a symplectic structure which is a perturbation of the canonical symplectic structure on $\mathbb R^4$, and possesses two functionally independent integrals. In other words, $\Phi_{J\nabla H}$ is completely integrable. Of course, the same holds true for $\Phi_{J\nabla K}$.
\item The integrals $\t H(x,\epsilon)$ and $\t K(x,\epsilon)$ of $\Phi_{J\nabla H}$ are rational perturbations of the original polynomials $H(x)$ and $K(x)$, are related by the same equation as \eqref{CR}, that is, $\nabla \t K=A \nabla\t H$, and satisfy the same second order differential equations \eqref{harm} as $H(x)$ and $K(x)$ do.
\item There exists a unique (up to sign) number $\alpha$ such that the maps $\Phi_{J\nabla H}$ and $\Phi_{\alpha^{-1}J\nabla K}$ commute. For this value of $\alpha$, the maps $\Phi_{J\nabla H}$ and $\Phi_{\alpha^{-1}J\nabla K}$ share the invariant symplectic structure and the two functionally independent integrals.
\end{itemize}

We provide the reader with a quick reminder about the general properties of the Kahan-Hirota-Kimura discretization method in Sect. \ref{sect HK}.
Then we discuss the details of the general construction of the dual pairs $H(x)$, $K(x)$ in Sect. \ref{sect int fam}. The rich algebraic properties of the corresponding vector fields $J\nabla H$ and $J\nabla K$ are collected in Sect. \ref{sect vector fields}. On the basis of these properties, we prove the main results in Sect. \ref{sect commut} (commutativity), Sect. \ref{sect integrals} (two integrals of motion), Sect. \ref{sect symplectic} (invariant symplectic structure) and Sect. \ref{sect diff eqs} (differential equations for the integrals of the maps).

\section{General properties of the Kahan-Hirota-Kimura discretization}
\label{sect HK}

Here we recall the main properties of the Kahan-Hirota-Kimura discretization, following mainly \cite{PPS1, PPS2} and \cite{CMOQ1}. 

The explicit form of the map $\Phi_f$ defined by \eqref{eq: Kahan gen} is 
\beq \label{eq: Phi gen}
\t x =\Phi_f(x,\ep)= x + 2\ep \left( I - \ep f'(x) \right)^{-1} f(x),
\eeq
where $f'(x)$ denotes the Jacobi matrix of $f(x)$. Moreover, if the vector field $f(x)$ is homogeneous (of degree 2), then \eqref{eq: Phi gen} can be equivalently rewritten as
\beq \label{eq: Phi hom}
\t x =\Phi_f(x,\ep)= \left( I - \epsilon f'(x) \right)^{-1} x.
\eeq
Due to \eqref{eq: reversible}, in the latter case we also have:
\beq \label{eq: Phi hom alt}
x =\Phi_f(\t x,-\ep)= \left( I + \epsilon f'(\t x) \right)^{-1} \t x \quad \Leftrightarrow \quad \t x = \left( I + \epsilon f'(\t x) \right) x.
\eeq
One has the following expression for the Jacobi matrix of the map $\Phi_f$:
\beq \label{Jac}
d\Phi_f(x)=\frac{\partial\t x}{\partial x}=\big(I-\epsilon f'(x)\big)^{-1}\big(I+\epsilon f'(\t x)\big).
\eeq
Let the vector field $f(x)$ be Hamiltonian, $f(x)=J\nabla H(x)$, where $H:\bbR^n \to\bbR$ is a cubic polynomial and $J$ is a non-degenerate skew-symmetric $n\times n$ matrix, so that $H(x)$ is an integral of motion for $\dot x=f(x)$. Then the map $\Phi_f(x,\epsilon)$ possesses the following rational integral of motion: 
\beq\label{eq: CMOQ integral}
\t H(x,\epsilon) = \frac{1}{6\epsilon}x^{\rm T}J^{-1}\t x=H(x)+\frac{2\ep}{3}  (\nabla H(x))^{\rm T} \left( I - \ep f'(x) \right)^{-1} f(x),
\eeq
as well as an invariant measure
\beq\label{eq: CMOQ measure}
\frac{dx_1\wedge\ldots\wedge dx_n}{\displaystyle{\det\left( I - \ep f'(x) \right)}}.
\eeq
These remarkable results from \cite{CMOQ1} hold true for all quadratic Hamiltonian vector fields and therefore are not related to integrability.

\section{A family of integrable 4-dimensional Hamiltonian systems}
\label{sect int fam}


Consider the canonical phase space $\bbR^{4}$ with coordinates $(x_1,x_{2},x_{3},x_{4})$, equipped with the standard symplectic structure in which the Poisson brackets of the coordinate functions are $\{x_1,x_3\}= \{x_2,x_4\}=1$ (all other brackets being either obtained from these ones by skew-symmetry or otherwise vanish). Let $H(x)=H(x_1,x_{2},x_{3},x_{4})$  be a Hamilton function on  $\bbR^{4}$. The corresponding Hamiltonian system is governed by the equations of motion
\beq  \label{Ham}
\dot x= J \nabla H(x),\quad J=\begin{pmatrix} 0 & I \\ -I & 0 \end{pmatrix}.
\eeq
\begin{proposition}
Consider a constant non-degenerate $4\times 4$ matrix $A$, and suppose that for a function $K(x)=K(x_1,x_{2},x_{3},x_{4})$ the following relations are satisfied:
\beq \label{grad K}
\nabla K = A \nabla H.
\eeq
If the matrix $A$ satisfies
\beq \label{cond A}
 A^{\rm{T}} J = J A,
\eeq
then the functions $H$, $K$ are in involution, so that the Hamiltonian system \eqref{Ham} is completely integrable.
\end{proposition}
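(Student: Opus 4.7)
The plan is to write out the Poisson bracket $\{H,K\}$ explicitly in terms of the gradients and the symplectic form $J$, then use the hypotheses \eqref{grad K} and \eqref{cond A} to show the resulting quadratic form in $\nabla H$ is identically zero. Complete integrability then follows from functional independence (to be addressed) together with the existence of two commuting integrals in a $4$-dimensional phase space.

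First I would recall that for the standard symplectic structure on $\bbR^4$ given by $J$ in \eqref{Ham}, the Poisson bracket of two smooth functions is
\[
\{H,K\}=(\nabla H)^{\rm T} J \nabla K.
\]
Substituting the assumption $\nabla K=A\nabla H$ from \eqref{grad K} yields
\[
\{H,K\}=(\nabla H)^{\rm T} (JA)\, \nabla H.
\]
A quadratic form $v^{\rm T} M v$ vanishes identically in $v$ if and only if the symmetric part of $M$ is zero, i.e.\ $M$ is skew-symmetric. Therefore it suffices to check that the matrix $JA$ is skew-symmetric. Using $J^{\rm T}=-J$ together with hypothesis \eqref{cond A} $A^{\rm T}J=JA$, we compute
\[
(JA)^{\rm T}=A^{\rm T}J^{\rm T}=-A^{\rm T}J=-JA,
\]
which is exactly skew-symmetry. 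Hence $\{H,K\}=0$.

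For the conclusion of complete integrability in the Liouville–Arnold sense one also needs functional independence of $H$ and $K$. Since $A$ is non-degenerate and $\nabla K=A\nabla H$, the gradients $\nabla H$ and $\nabla K$ are linearly independent at a generic point whenever $\nabla H$ itself is not an eigenvector of $A$; in the generic situation (which the paper is evidently treating) this holds on a dense open set, giving two independent integrals in involution on a $4$-dimensional symplectic manifold, which is precisely the definition of complete integrability. I do not expect any real obstacle here: the computation is a one-line manipulation once the bracket is written in matrix form, and condition \eqref{cond A} has manifestly been chosen so that $JA$ is skew-symmetric. The only mild subtlety is independence of $H$ and $K$, which is a genericity statement rather than part of the algebraic identity.
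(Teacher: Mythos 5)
Your proof is correct and follows essentially the same route as the paper: write $\{H,K\}=(\nabla H)^{\rm T}JA\,\nabla H$ and observe that \eqref{cond A} makes $JA$ skew-symmetric, with functional independence of $H$ and $K$ treated as a genericity statement. The verification $(JA)^{\rm T}=A^{\rm T}J^{\rm T}=-A^{\rm T}J=-JA$ is exactly the one-line check the paper leaves implicit.
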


\benpf We have:
$$
\{H,K\} = (\nabla H)^{\rm{T}} J \nabla K=(\nabla H)^{\rm{T}} J A \nabla H,
$$
and this vanishes if the matrix $JA$ is skew-symmetric, which gives condition \eqref{cond A}. Equation \eqref{grad K} with a non-scalar matrix $A$ also ensures that $H$, $K$ are functionally independent (generically).
\endpf

If $A$ is written in the block form as
$$
A = \begin{pmatrix}
A_1 & A_2 \\
A_3 & A_4
\end{pmatrix}
$$
with $2\times 2$ blocks $A_i$, then condition \eqref{cond A} reads:
\beq \label{cond2}
A_1^{\rm{T}}=A_4, \qquad A_2^{\rm{T}}=-A_2, \qquad A_3^{\rm{T}}=-A_3.
\eeq
Such matrices form a six-dimensional vector space:
\beq \label{A prelim}
A=\begin{pmatrix}
a_1 & a_2 & 0 &-a_5 \\
a_3 & a_4 & a_5 & 0 \\
0& a_6 & a_1 &a_3\\
-a_6 &0 & a_2 & a_4
\end{pmatrix}.
\eeq

We now discuss applicability of this construction. For a given function $H$, differential equations \eqref{grad K} for $K$ are solvable if and only if $H$ satisfies the following condition:
\beq  \label{d2H}
A(\nabla^2 H)=(\nabla^2 H) A^{\rm T},
\eeq
where $\nabla^2 H$ is the Hesse matrix of the function $H$. Written down explicitly, one has the following five second order PDEs for $H$:
\bea
&& C_1=-a_3 \frac{\pa^2 H}{\pa x_1^2}
+a_2 \frac{\pa^2 H}{\pa x_2^2}
+(a_1-a_4) \frac{\pa^2 H}{\pa x_1 \pa x_2}
-a_5 \left(\frac{\pa^2 H}{\pa x_1 \pa x_3}
+ \frac{\pa^2 H}{\pa x_2 \pa x_4}\right) =0, \label{comp1} \\
&& C_2= -a_2 \frac{\pa^2 H}{\pa x_3^2}
+a_3 \frac{\pa^2 H}{\pa x_4^2}
+(a_1-a_4) \frac{\pa^2 H}{\pa x_3 \pa x_4}
+a_6\left( \frac{\pa^2 H}{\pa x_2 \pa x_4}
+ \frac{\pa^2 H}{\pa x_1 \pa x_3}\right)=0, \label{comp2} \\
&& C_3=a_6 \frac{\pa^2 H}{\pa x_1^2}
-a_5 \frac{\pa^2 H}{\pa x_4^2}
+(a_1-a_4) \frac{\pa^2 H}{\pa x_1 \pa x_4}
+a_2 \left(
 \frac{\pa^2 H}{\pa x_2 \pa x_4}
 -\frac{\pa^2 H}{\pa x_1 \pa x_3}\right)=0, \label{comp3} \\
&& C_4=  a_5 \frac{\pa^2 H}{\pa x_3^2}
-a_6 \frac{\pa^2 H}{\pa x_2^2}
-(a_1-a_4) \frac{\pa^2 H}{\pa x_2 \pa x_3}
+a_3 \left( \frac{\pa^2 H}{\pa x_1 \pa x_3}
- \frac{\pa^2 H}{\pa x_2 \pa x_4}\right)=0, \label{comp4} \\
&& C_5=-a_6 \frac{\pa^2 H}{\pa x_1 \pa x_2}
-a_5 \frac{\pa^2 H}{\pa x_3 \pa x_4}
+a_2 \frac{\pa^2 H}{\pa x_2 \pa x_3}
-a_3 \frac{\pa^2 H}{\pa x_1 \pa x_4}
=0.                                                             \label{comp5} 
\eea
Only four of these PDEs are linearly independent, due to the linear relation
$$
a_6 C_1+a_5 C_2+a_3 C_3+a_2 C_4+(a_1-a_4)C_5=0.
$$
\begin{proposition}
The linear space of homogeneous polynomials of degree 3 satisfying the system of second order PDEs (\ref{comp1})--(\ref{comp5}), has dimension 8.
\end{proposition}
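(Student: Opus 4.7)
The plan is to reformulate the conditions compactly, exploit symplectic covariance to reduce $A$ to a canonical diagonal form, and count solutions directly in that form.

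First, I would observe that, under \eqref{cond A}, the matrix equation \eqref{d2H} is equivalent to the single commutator identity $[A, N]=0$, where $N := (\nabla^2 H)\, J$. Indeed, multiplying \eqref{d2H} on the right by $J$ and using $A^T J = JA$ yields $AN = NA$. So the PDE system \eqref{comp1}--\eqref{comp5} says exactly that the matrix $N$, whose entries are linear in $x$, commutes with $A$.

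Second, the class of admissible $A$'s is stable under conjugation by $P \in \mathrm{Sp}(4,\bbR)$, and the substitution $H(x) \mapsto H(Px)$ identifies the cubic solution spaces for $A$ and for $P^{-1} A P$. Hence the dimension we seek is a symplectic invariant of $A$. The key spectral fact is the identity $\det(A - \mu I) = \det(J(A-\mu I)) = \det(JA - \mu J)$ (using $\det J = 1$); since $JA - \mu J$ is skew-symmetric (both $JA$ and $J$ are skew), its determinant is the square of a Pfaffian polynomial in $\mu$, so every eigenvalue of $A$ has even algebraic multiplicity. For a generic admissible $A$ this gives two distinct eigenvalues $\lambda_1 \neq \lambda_2$, each with a $2$-dimensional eigenspace. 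The eigenspaces $E_1, E_2$ are mutually symplectically orthogonal: for $v \in E_1$, $w \in E_2$,
$$
\lambda_1\, v^T J w = (Av)^T J w = v^T J(Aw) = \lambda_2\, v^T J w,
$$
forcing $v^T J w = 0$. Each $E_i$ is therefore a symplectic plane, and a symplectic change of basis puts $A$ in the diagonal form $A = \mathrm{diag}(\lambda_1, \lambda_2, \lambda_1, \lambda_2)$, compatible with the block form of $J$.

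For this canonical $A$, \eqref{d2H} reduces to $(\lambda_i - \lambda_j) H_{x_i x_j} = 0$, i.e.\ the four second-order equations $H_{x_1 x_2} = H_{x_1 x_4} = H_{x_2 x_3} = H_{x_3 x_4} = 0$. These force every monomial of $H$ to involve variables only from $\{x_1, x_3\}$ or only from $\{x_2, x_4\}$, so $H = F(x_1, x_3) + G(x_2, x_4)$ with $F, G$ arbitrary homogeneous cubics in two variables. Since each such space has dimension $\binom{3+1}{1}=4$, we conclude $\dim = 2 \cdot 4 = 8$.

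The main obstacle is the spectral reduction step: one must verify that a generic admissible $A$ can be symplectically diagonalized to the form $\mathrm{diag}(\lambda_1, \lambda_2, \lambda_1, \lambda_2)$. Non-generic admissible $A$'s (complex-conjugate eigenvalue pairs, a single eigenvalue of multiplicity $4$, or non-trivial Jordan structure) require their own symplectic normal forms, but the same counting argument runs through in each case and yields the same dimension $8$.
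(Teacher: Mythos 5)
Your proof takes a genuinely different and more conceptual route than the paper's, which simply treats $C_1=\dots=C_4=0$ as $16$ linear equations on the $20$ coefficients of a cubic and verifies that the rank of the system is $12$. Your reformulation of \eqref{d2H} as a commutation relation, the use of symplectic covariance, and the reduction to $A=\mathrm{diag}(\lambda_1,\lambda_2,\lambda_1,\lambda_2)$ --- where the count $4+4=8$ is transparent --- explain \emph{why} the answer is $8$ and would adapt to higher dimensions, which the paper's brute-force computation does not. The skeleton is sound, but two steps need repair before the argument is complete.

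First, the assertion that each eigenvalue of a generic admissible $A$ has a \emph{two-dimensional eigenspace} does not follow from the evenness of algebraic multiplicities alone: a priori $A$ could consist of two $2\times 2$ Jordan blocks with distinct eigenvalues. You need an extra input, and the cheapest one is already in the paper: \eqref{A inv} is equivalent to $A^2=(a_1+a_4)A+DI$, so a non-scalar admissible $A$ has quadratic minimal polynomial and is diagonalizable with two-dimensional eigenspaces whenever $(a_1+a_4)^2+4D\neq 0$ (for the trace-normalized form \eqref{A red} this is just non-degeneracy, by \eqref{A square}). Alternatively, show that the \emph{generalized} eigenspaces for $\lambda_1\neq\lambda_2$ are $J$-orthogonal, hence symplectic planes, and that a $2\times 2$ matrix $B$ with $B^{\rm T}J=JB$ is necessarily scalar. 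Second, complex-conjugate eigenvalues are not a non-generic degeneration: they occur on the open parameter set $D<0$, so half of the admissible matrices are not covered by a real diagonalization. The fix is cheap --- run the argument over $\mathbb{C}$, where the $\mathrm{Sp}(4,\mathbb{C})$ reduction works verbatim, and note that since the PDE system has real coefficients the real dimension of its solution space equals the complex dimension of its complexification --- but it must be said. Finally, the closing claim that every remaining case also yields $8$ is false as stated: a scalar $A$ gives dimension $20$, and a genuinely non-diagonalizable admissible $A$ would require a separate count; like the paper's rank computation, the proposition should be read as a statement about non-degenerate, non-scalar $A$.
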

\begin{proof}
A general homogeneous polynomial of degree 3 in four variables $x_1$, $x_2$, $x_3$, $x_4$ has 20 coefficients. Each of the expressions
$C_i$, $i=1,\ldots 4$, is a linear polynomial in these four variables, so that each equation $C_i=0$ results in four linear equations for the coefficients of $H$. Altogether we get 16 linear homogeneous equations for 20 coefficients of $H$. A careful inspection of the resulting linear system reveals that the rank of its matrix is equal to 12. Therefore, the dimension of the space of solutions is equal to $20-12=8$. 
\end{proof}
Note that the solutions $K(x)$ of the first order PDEs (\ref{grad K}) satisfy that the same compatibility conditions (\ref{comp1})--(\ref{comp5}). To see this, observe that one has $\nabla H=A^{-1} \nabla K$ and 
\beq \label{A inv}
A^{-1}=\frac{1}{D}\begin{pmatrix}  -a_4 & a_2 & 0 & -a_5 \\
                                                              a_3 & -a_1 & a_5 & 0\\
                                                              0 & a_6 & -a_4 & a_3\\
                                                               -a_6 & 0 & a_2 & -a_1
                                   \end{pmatrix} , \quad {\rm where}\quad D=-a_1a_4+a_2a_3+a_5a_6\neq 0.
\eeq
Clearly, the set of PDEs (\ref{comp1})--(\ref{comp5}) generated by the latter matrix coincides with the original one. Thus, to any solution $H$ of the system  (\ref{comp1})--(\ref{comp5}) there corresponds, via \eqref{grad K}, another solution $K$ (unique up to an additive constant). Changing $A$ to $A+\beta I$ would lead to changing $K$ to $K+\beta H$, and would not change the set of all linear combinations of $H$ and $K$. We will use this freedom to ensure that 
\beq \label{tr=0}
{\rm tr}\ A=0 \quad \Leftrightarrow \quad a_4=-a_1.
\eeq
Thus, from now on we assume that
\beq \label{A red}
A=\begin{pmatrix}
a_1 & a_2 & 0 &-a_5 \\
a_3 & -a_1 & a_5 & 0 \\
0& a_6 & a_1 &a_3\\
-a_6 &0 & a_2 & -a_1
\end{pmatrix} .
\eeq
We mention also the following properties of the matrices $A$ as in \eqref{A red}:
\beq \label{A square}
A^{-1}=\frac{1}{D} A \; \Leftrightarrow\; A^2=DI, \quad {\rm where}\quad D=a_1^2+a_2a_3+a_5a_6.
\eeq
This follows immediately from \eqref{A inv}. Furthermore, a direct computation shows that
\beq \label{A ev}
\det(A-\lambda I)=(\lambda^2-D)^2.
\eeq

\section{General algebraic properties of the vector fields $f$, $g$}
\label{sect vector fields}

From now on we will assume that $A$ is a matrix as in \eqref{A red}. Define $D$ is as in \eqref{A square}, and let $\alpha$ be a real or a purely imaginary number (depending on whether $D>0$ or $D<0$) satisfying 
\beq \label{alpha}
\alpha^2=D.
\eeq 
Assume that $H(x)$ and $K(x)$ are homogeneous polynomials of degree 3 satisfying \eqref{grad K}. Set 
\beq \label{fg}
f(x)=J\nabla H(x), \quad g(x)=\alpha^{-1}J\nabla K(x)=\alpha^{-1}JA\nabla H(x).
\eeq
Due to \eqref{cond A}, we have:
\beq \label{f vs g}
g(x)=\alpha^{-1}A^{\rm T} f(x)\quad \Leftrightarrow \quad f(x)=\alpha^{-1}A^{\rm T}g(x).
\eeq
This means that the roles of the vector fields $f$, $g$ in all our constructions are absolutely symmetric. Furthermore,
\beq \label{f'g'}
f'(x)=J\nabla^2 H(x), \quad g'(x)=\alpha^{-1} J\nabla^2 K(x)=\alpha^{-1}JA\nabla^2 H(x).
\eeq
Again, due to \eqref{cond A}, we have:
\beq \label{f' vs g'}
g'(x)=\alpha^{-1}A^{\rm T} f'(x)\quad \Leftrightarrow \quad f'(x)=\alpha^{-1}A^{\rm T}g'(x).
\eeq
\begin{lemma}
The following identities hold true:
\beq \label{Ham f'}
(f'(x))^{\rm T}J=-Jf'(x), \quad (g'(x))^{\rm T}J=-Jg'(x),
\eeq
\beq \label{Af'}
 A^{\rm T}f'(x)=f'(x)A^{\rm T}, \quad  A^{\rm T}g'(x)=g'(x)A^{\rm T},
\eeq
\beq \label{vf comm}
 f'(x)g(x)=g'(x)f(x),
\eeq
\beq \label{f'f=g'g}
 f'(x)f(x)=g'(x)g(x),
\eeq
\beq \label{f'g'=g'f'}
f'(x)g'(x)=g'(x)f'(x),
\eeq
\beq \label{f'f'=g'g'}
(f'(x))^2=(g'(x))^2.
\eeq
\end{lemma}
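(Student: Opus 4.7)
The plan is to reduce all six identities to three algebraic ingredients that are used repeatedly: the symmetry of $\nabla^2 H$, the defining relation $A^{\rm T} J = JA$ from \eqref{cond A}, and the squaring relation $A^2 = DI = (A^{\rm T})^2$ from \eqref{A square} together with $\alpha^2 = D$. Throughout I write $f'(x)=J\nabla^2 H(x)$ and $g'(x)=\alpha^{-1} JA\nabla^2 H(x)$, and I use the standard facts $J^{\rm T} = -J$ and $J^2 = -I$ for the canonical symplectic form.

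A preliminary observation worth making first: the matrix $\nabla^2 K = A\nabla^2 H$ is itself symmetric, because by \eqref{d2H} one has $(A\nabla^2 H)^{\rm T}=\nabla^2 H\,A^{\rm T}=A\nabla^2 H$; and $K$ satisfies the same Hessian identity $A\nabla^2 K=\nabla^2 K\,A^{\rm T}$, since both sides equal $D\nabla^2 H$ by \eqref{A square}. With these in hand, the Hamiltonian identity \eqref{Ham f'} is a one-line computation
\[
(f')^{\rm T} J = \nabla^2 H\cdot J^{\rm T} J = \nabla^2 H = -J\cdot J\nabla^2 H = -J f',
\]
and the statement for $g'$ follows from the same string with $\nabla^2 H$ replaced by the symmetric matrix $\nabla^2 K$.

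Identity \eqref{Af'} is then a direct combination of the two defining relations:
\[
A^{\rm T} f' = A^{\rm T} J\,\nabla^2 H = JA\,\nabla^2 H = J\,\nabla^2 H\,A^{\rm T} = f' A^{\rm T},
\]
and the analogous chain for $g'$ uses $A\nabla^2 K = \nabla^2 K\,A^{\rm T}$ instead.

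For the remaining four identities the uniform trick is: move $A$ past the Hessian via \eqref{d2H}, then swap $A^{\rm T} J \leftrightarrow JA$ via \eqref{cond A}. For the commutator identity \eqref{vf comm} this gives
\[
g' f = \alpha^{-1} JA\,\nabla^2 H\cdot J\nabla H = \alpha^{-1} J\,\nabla^2 H\,A^{\rm T} J\,\nabla H = \alpha^{-1} J\,\nabla^2 H\cdot JA\,\nabla H = f' g.
\]
The three quadratic identities \eqref{f'f=g'g}, \eqref{f'g'=g'f'}, \eqref{f'f'=g'g'} all follow from the single sandwich formula $A\,\nabla^2 H\cdot A^{\rm T}=\nabla^2 H\cdot(A^{\rm T})^2 = D\,\nabla^2 H$, which is immediate from \eqref{d2H} and \eqref{A square}. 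Inserting this identity inside $g' g$, $g' g'$, or $g' f'$ produces exactly a factor $D$ that cancels the prefactor $\alpha^{-2}$, converting each expression into its $f'$-counterpart. I do not foresee any genuine obstacle; the only bookkeeping issue is keeping track of which side of $\nabla^2 H$ each copy of $A$ currently sits on, and invoking \eqref{A square} once in each of the three quadratic identities.
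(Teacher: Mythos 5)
Your proof is correct and follows essentially the same route as the paper's: both establish \eqref{Ham f'} and \eqref{Af'} directly from the symmetry of the Hessians together with \eqref{cond A} and \eqref{d2H}, and then obtain the remaining four identities by commuting $A^{\rm T}$ past $f'$, $g'$ and invoking $A^2=DI$. The only cosmetic difference is that the paper packages the computation through the relations $g=\alpha^{-1}A^{\rm T}f$ and $g'=\alpha^{-1}A^{\rm T}f'$ rather than expanding into $J$ and $\nabla^2 H$; note also that in $g'g$ and $(g')^2$ the factor of $D$ actually arises from $A\nabla^2 H\,JA = D\,\nabla^2 H\,J$ (your sandwich identity plus one extra application of \eqref{cond A}, which your ``uniform trick'' already covers), and that \eqref{f'g'=g'f'} needs no appeal to $A^2=DI$ at all.
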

\begin{proof}
Equation \eqref{Ham f'} is the characteristic property of Jacobi matrices of Hamiltonian vector fields.
Equation \eqref{Af'} is equivalent to \eqref{d2H}, due to \eqref{cond A}.
To prove \eqref{vf comm}, \eqref{f'f=g'g}, we compute with the help of \eqref{f vs g}, \eqref{Af'}:
\[
f'(x)g(x)=\alpha^{-1}A^{\rm T}g'(x)g(x)=\alpha^{-1}g'(x)A^{\rm T}g(x)=g'(x)f(x),
\]
and similarly,
\[
f'(x)f(x)=\alpha^{-1}A^{\rm T}g'(x)f(x)=\alpha^{-1}g'(x)A^{\rm T}f(x)=g'(x)g(x).
\]
Observe that \eqref{vf comm} expresses commutativity of the vector fields $f(x)$ and $g(x)$. 
Identities \eqref{f'g'=g'f'}, \eqref{f'f'=g'g'} are proved along the same lines, with the help of \eqref{f' vs g'}, \eqref{Af'}:
\[
f'(x)g'(x)=\alpha^{-1}A^{\rm T}g'(x)g'(x)=\alpha^{-1}g'(x)A^{\rm T}g'(x)=g'(x)f'(x),
\]
and
\[
(f'(x))^2=\alpha^{-1}A^{\rm T}g'(x)f'(x)=\alpha^{-1}g'(x)A^{\rm T}f'(x)=(g'(x))^2.
\]
\end{proof}
\begin{cor}\label{th1}
There holds:
\beq \label{condet}
\det  \left(I - \ep f'(x) \right)= \det  \left(I - \ep g'(x) \right) .
\eeq
\end{cor}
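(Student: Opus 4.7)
The plan is to combine the Hamiltonian character of both $f$ and $g$, expressed by \eqref{Ham f'}, with the coincidence of squares $(f'(x))^2=(g'(x))^2$ established in \eqref{f'f'=g'g'}, in order to equate the two determinants in \eqref{condet}.

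First I would exploit \eqref{Ham f'}: the relation $(f'(x))^{\rm T}J=-Jf'(x)$ rewrites as $f'(x)=-J^{-1}(f'(x))^{\rm T}J$, so $f'(x)$ is similar to $-(f'(x))^{\rm T}$, hence has the same characteristic polynomial as $-f'(x)$. In particular, the eigenvalues of $f'(x)$ come in pairs $\{\mu,-\mu\}$, its characteristic polynomial involves only even powers, and therefore
\[
\det\bigl(I-\epsilon f'(x)\bigr)=\det\bigl(I+\epsilon f'(x)\bigr),
\]
with the analogous identity holding for $g'(x)$.

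Next I would multiply these two factorizations together and recognise a difference of squares:
\[
\det\bigl(I-\epsilon f'(x)\bigr)^{2}=\det\bigl(I-\epsilon f'(x)\bigr)\det\bigl(I+\epsilon f'(x)\bigr)=\det\bigl(I-\epsilon^{2}(f'(x))^{2}\bigr),
\]
and similarly $\det\bigl(I-\epsilon g'(x)\bigr)^{2}=\det\bigl(I-\epsilon^{2}(g'(x))^{2}\bigr)$. Invoking $(f'(x))^{2}=(g'(x))^{2}$ from \eqref{f'f'=g'g'}, the two right-hand sides coincide, and thus
\[
\det\bigl(I-\epsilon f'(x)\bigr)^{2}=\det\bigl(I-\epsilon g'(x)\bigr)^{2}.
\]
Both sides of \eqref{condet} are polynomials in $\epsilon$ that equal $1$ at $\epsilon=0$; extracting the square root and using continuity in $\epsilon$ forces the positive sign, which proves \eqref{condet}.

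There is no substantial obstacle here: once the preceding lemma is in hand, the argument is a clean algebraic manipulation whose only non-formal ingredient is the $\pm$-symmetry of the eigenvalues of a Hamiltonian Jacobi matrix. The only point requiring attention is the final sign selection, which is automatic since both determinants are polynomial functions of $\epsilon$ starting from the common value $1$.
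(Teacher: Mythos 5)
Your argument is correct, and it rests on exactly the same two ingredients as the paper's proof --- the Hamiltonian identity \eqref{Ham f'} and the equality of squares \eqref{f'f'=g'g'} --- but it packages them differently. The paper observes that $\det(I-\ep M)$ for a $4\times4$ matrix $M$ is determined by ${\rm tr}\,M^k$, $k=1,\dots,4$, notes that the odd traces vanish for Hamiltonian Jacobi matrices, and gets the even traces to agree from $(f')^2=(g')^2$; there is no square root to extract. You instead convert \eqref{Ham f'} into the evenness $\det(I-\ep f')=\det(I+\ep f')$, factor the square of each determinant as $\det\bigl(I-\ep^2(f'(x))^2\bigr)$, and then resolve the resulting sign ambiguity by evaluating at $\ep=0$ (which is legitimate: for fixed $x$ both sides are polynomials in $\ep$, so $P^2=Q^2$ forces $P=\pm Q$ identically, and $P(0)=Q(0)=1$ rules out the minus sign). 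Your route is slightly longer because of that last normalization step, but it has the minor advantage of not invoking the trace parametrization of the characteristic polynomial; the paper's trace argument is the more economical of the two and generalizes verbatim to any even dimension by the same reasoning.
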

\begin{proof}
It is enough to prove that ${\rm tr}\, (f'(x))^k={\rm tr}\, (g'(x))^k$ for $k=1,2,3,4$. Since vector fields $f,g$ are Hamiltonian, we have:
${\rm tr}\, (f'(x))^k=0$ and ${\rm tr}\, (g'(x))^k=0$ for $k=1,3$. The equalities for $k=2,4$ follow from \eqref{f'f'=g'g'}.
\end{proof}

\begin{lemma}
The following identities hold true:
\beq \label{det(f+g)}
\det\big(f'(x)+g'(x)\big)=0, \quad \det\big(f'(x)-g'(x)\big)=0.
\eeq
\end{lemma}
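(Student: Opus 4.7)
The plan is to exploit the fact that, by \eqref{f' vs g'}, the Jacobi matrix $f'(x)$ differs from $g'(x)$ only by multiplication by the constant matrix $\alpha^{-1}A^{\rm T}$. This reduces the problem about $f'(x)\pm g'(x)$ to a problem about $A^{\rm T}\pm\alpha I$, which we already control completely thanks to the explicit characteristic polynomial \eqref{A ev}.

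Concretely, I would first rewrite
\[
f'(x)\pm g'(x)=\alpha^{-1}A^{\rm T}g'(x)\pm g'(x)=\alpha^{-1}\bigl(A^{\rm T}\pm\alpha I\bigr)g'(x),
\]
so that
\[
\det\bigl(f'(x)\pm g'(x)\bigr)=\alpha^{-4}\det\bigl(A^{\rm T}\pm\alpha I\bigr)\det g'(x)=\alpha^{-4}\det\bigl(A\pm\alpha I\bigr)\det g'(x).
\]
Next I would invoke \eqref{A ev}, which says $\det(A-\lambda I)=(\lambda^2-D)^2$. Setting $\lambda=\pm\alpha$ and using $\alpha^2=D$, both $\det(A-\alpha I)$ and $\det(A+\alpha I)$ vanish, and \eqref{det(f+g)} follows immediately.

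There is no real obstacle here, since everything is reduced to the single spectral fact that $\pm\alpha$ are eigenvalues of $A$; the only thing to be a little careful about is handling the factor $\alpha^{-4}$ (which is well-defined because $D=\alpha^2\neq 0$, guaranteed by the non-degeneracy of $A$). As a sanity check, one can reach the weaker conclusion $\det(f'+g')\det(f'-g')=0$ in one line from \eqref{f'g'=g'f'} and \eqref{f'f'=g'g'} via
\[
(f'(x)+g'(x))(f'(x)-g'(x))=(f'(x))^2-(g'(x))^2=0,
\]
but this does not separate the two determinants; the argument through \eqref{A ev} is what guarantees that each factor vanishes on its own.
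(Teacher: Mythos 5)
Your proof is correct and follows essentially the same route as the paper: the paper factors $f'(x)\pm g'(x)=(I\pm\alpha^{-1}A^{\rm T})f'(x)$ using \eqref{f' vs g'} and then invokes $\det(\alpha I\pm A)=0$ from \eqref{A ev} and \eqref{alpha}, exactly the spectral fact you use (you merely factor through $g'(x)$ instead of $f'(x)$, which is immaterial by symmetry). No gaps.
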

\begin{proof}
Indeed, these determinants are equal to 
$$
\det\big((I\pm\alpha^{-1}A^{\rm T})f'(x)\big),
$$
and they both vanish due to $\det(\alpha I\pm A)=0$, which is a direct consequence of \eqref{A ev}, \eqref{alpha}.
\end{proof}

\begin{lemma} \label{lemma f'f'}
The following identities hold true:
\beq
(f'(x))^2=(g'(x))^2=p(x)I+q(x)A^{\rm T},
\eeq
where
\begin{eqnarray}
p(x)  & = & \frac{1}{4}{\rm tr} (f'(x))^2,  \label{p}\\
q(x) & = & \frac{1}{4D}{\rm tr}\big(A^{\rm T}(f'(x))^2\big)=\frac{1}{4\alpha}{\rm tr}\big(f'(x)g'(x)\big). \label{q}
\end{eqnarray}
\end{lemma}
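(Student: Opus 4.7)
The equality $(f'(x))^2 = (g'(x))^2$ is already secured by \eqref{f'f'=g'g'}, so the remaining task is to show that this common matrix lies in the 2-dimensional subspace spanned by $I$ and $A^{\rm T}$, and then to identify the coefficients by taking traces.

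The starting point is the spectral structure of $A^{\rm T}$: by \eqref{A square} we have $(A^{\rm T})^2 = D I = \alpha^2 I$, while \eqref{A ev} says that $\pm\alpha$ are eigenvalues each of algebraic multiplicity $2$. Together with the normalization ${\rm tr}(A^{\rm T}) = 0$ from \eqref{tr=0}, this means that the idempotents $P_{\pm} = \frac{1}{2}(I \pm \alpha^{-1} A^{\rm T})$ are complementary projectors of rank $2$, with $A^{\rm T} P_{\pm} = \pm \alpha P_{\pm}$. (When $D<0$ this decomposition takes place over $\mathbb{C}$, but since the final identity is polynomial in the entries it descends automatically to $\mathbb{R}$.)

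By \eqref{Af'}, $f'(x)$ commutes with $A^{\rm T}$ and hence with $P_{\pm}$. Consequently, using $P_{+}P_{-}=0$,
$$
(f'(x))^2 = (f'(x))^2 (P_{+} + P_{-}) = (f'(x) P_{+})^2 + (f'(x) P_{-})^2.
$$
Each $f'(x) P_{\pm}$ restricts to the 2-dimensional range of $P_{\pm}$, and that restriction is traceless: indeed ${\rm tr}\,f'(x) = 0$ since $f$ is Hamiltonian (cf. \eqref{Ham f'}), while ${\rm tr}(A^{\rm T} f'(x)) = \alpha\,{\rm tr}\,g'(x) = 0$ by $A^{\rm T} f' = \alpha g'$ (from \eqref{f' vs g'}) and the Hamiltonian character of $g$; therefore ${\rm tr}(f'(x) P_{\pm}) = \frac{1}{2}{\rm tr}\,f'(x) \pm \frac{1}{2\alpha}{\rm tr}(A^{\rm T} f'(x)) = 0$. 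Any traceless $2\times 2$ matrix $M$ satisfies $M^2 = -(\det M)\,I$, so $(f'(x) P_{\pm})^2 = c_{\pm}(x) P_{\pm}$ for some scalars $c_{\pm}(x)$, and summing gives $(f'(x))^2 = c_{+}(x) P_{+} + c_{-}(x) P_{-}$, which is manifestly of the form $p(x)\,I + q(x)\,A^{\rm T}$.

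The coefficients are then fixed by taking traces of $(f'(x))^2 = p(x)\,I + q(x)\,A^{\rm T}$: using ${\rm tr}(A^{\rm T}) = 0$ yields \eqref{p}, while pairing with $A^{\rm T}$ and using ${\rm tr}((A^{\rm T})^2) = 4D$ gives $q(x) = \frac{1}{4D}{\rm tr}(A^{\rm T}(f'(x))^2)$; the alternative expression in \eqref{q} follows by one more substitution $A^{\rm T} f'(x) = \alpha g'(x)$. The one substantive point in the whole argument is the vanishing of ${\rm tr}(f'(x) P_{\pm})$ on the 2-dimensional eigenspaces: this is precisely what confines $(f')^2$ to the 2-dimensional span of $I$ and $A^{\rm T}$ rather than allowing it to range over the full 8-dimensional commutant of $A^{\rm T}$.
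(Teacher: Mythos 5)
Your proof is correct, but it follows a genuinely different route from the paper's. The paper establishes $\det(f'\pm g')=0$ (its equation \eqref{det(f+g)}), writes down the resulting even characteristic polynomials of the Hamiltonian matrices $f'\pm g'$, applies the $4\times4$ Cayley--Hamilton theorem to both, adds the two identities using \eqref{f'g'=g'f'} and \eqref{f'f'=g'g'}, and finally divides by the \emph{generically} invertible matrix $(f')^2$. You instead diagonalize the action of $A^{\rm T}$ via the rank-two spectral projectors $P_\pm=\frac{1}{2}(I\pm\alpha^{-1}A^{\rm T})$, observe that $f'$ preserves each eigenspace by \eqref{Af'} and that its restriction there is traceless (since both ${\rm tr}\,f'$ and ${\rm tr}(A^{\rm T}f')=\alpha\,{\rm tr}\,g'$ vanish for Hamiltonian matrices), and then invoke the $2\times2$ Cayley--Hamilton identity $M^2=-(\det M)I$ on each block. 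The trade-offs: your argument needs the complexified eigenprojections when $D<0$ (correctly handled by noting the final identity is polynomial and real), whereas the paper stays with real $4\times4$ matrices throughout; on the other hand, your argument is pointwise valid for every $x$ and avoids both the separate determinant lemma and the paper's division by $(f')^2$, which strictly speaking only holds on a Zariski-open set and requires a density argument to extend the identity everywhere. Your identification of the coefficients $p$, $q$ by tracing against $I$ and $A^{\rm T}$ (using ${\rm tr}\,A^{\rm T}=0$ and ${\rm tr}\,(A^{\rm T})^2=4D$) is clean and matches \eqref{p}, \eqref{q}; your closing remark correctly isolates the tracelessness of $f'P_\pm$ as the substantive input that pins $(f')^2$ down to the span of $I$ and $A^{\rm T}$ inside the 8-dimensional commutant.
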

\begin{proof}
We have, due to \eqref{det(f+g)}:
\begin{eqnarray}
\det(\lambda I- (f'-g')) & = & \lambda^4-\frac{1}{2}\lambda^2 \ {\rm tr} (f'-g')^2, \label{char pol f'-g'}\\
\det(\lambda I- (f'+g')) & = & \lambda^4-\frac{1}{2}\lambda^2 \ {\rm tr} (f'+g')^2. \label{char pol f'+g'}
\end{eqnarray}
By the theorem of Caley-Hamilton, we have:
\begin{eqnarray}
(f'-g')^4-\frac{1}{2}(f'-g')^2 \ {\rm tr} (f'-g')^2 & = & 0, \label{pol f'-g'}\\
(f'+g')^4-\frac{1}{2}(f'+g')^2 \ {\rm tr} (f'+g')^2 & = & 0. \label{pol f'+g'}
\end{eqnarray}
Add the latter two identities, taking into account \eqref{f'g'=g'f'} and \eqref{f'f'=g'g'}. The result reads:
$$
16 (f')^4 -4(f')^2 \ {\rm tr} (f')^2-4f'g' \ {\rm tr}(f'g')=0,
$$
or, equivalently,
$$
(f')^4-\frac{1}{4}(f')^2 \ {\rm tr} (f')^2 -\frac{1}{4D}A^{\rm T}(f')^2 \ {\rm tr} (A^{\rm T} (f')^2)=0.
$$
Upon dividing by a generically non-degenerate matrix $(f')^2$ (its determinant is a non-vanishing homogeneous polynomial of degree 8), we arrive at the desired statement.
\end{proof}

\begin{lemma} \label{lemma p^2-q^2}
The following identity holds true:
\beq
p^2(x)-D q^2(x)= \frac{1}{16}\big({\rm tr} (f'(x))^2\big)^2-\frac{1}{16}\big({\rm tr} (f'(x)g'(x))\big)^2=\det f'(x).
\eeq
\end{lemma}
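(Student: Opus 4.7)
The first equality is immediate from the definitions: using \eqref{p}, \eqref{q}, and $\alpha^2=D$ from \eqref{alpha}, we have $p^2=\tfrac{1}{16}({\rm tr}(f'(x))^2)^2$ and $Dq^2=\tfrac{D}{16\alpha^2}({\rm tr}(f'(x)g'(x)))^2=\tfrac{1}{16}({\rm tr}(f'(x)g'(x)))^2$, and subtracting yields the first claimed identity.

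For the second equality $p^2-Dq^2=\det f'(x)$, the plan is to exploit the fact that $f'(x)$ is a Hamiltonian matrix (see \eqref{Ham f'}), so that all odd traces vanish: ${\rm tr}(f'(x))^{2k+1}=0$. Indeed, from $(f')^{\rm T}J=-Jf'$ one gets $(f')^{\rm T}=-Jf'J^{-1}$, hence $((f')^{\rm T})^k=(-1)^k J(f')^kJ^{-1}$, and equating traces of $(f')^k$ and $((f')^{\rm T})^k$ forces the vanishing for odd $k$. Newton's identities for the fourth elementary symmetric function of the eigenvalues of a $4\times 4$ matrix then collapse, in the presence of $p_1=p_3=0$, to
$$\det f'(x) = \tfrac{1}{8}\bigl({\rm tr}(f'(x))^2\bigr)^2 - \tfrac{1}{4}\,{\rm tr}(f'(x))^4.$$

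The final step is to evaluate ${\rm tr}(f'(x))^4$ via Lemma \ref{lemma f'f'}. Squaring the identity $(f'(x))^2=p(x)I+q(x)A^{\rm T}$ and using $(A^{\rm T})^2=DI$ (from \eqref{A square}) gives $(f'(x))^4=\bigl(p^2(x)+Dq^2(x)\bigr)I+2p(x)q(x)A^{\rm T}$. Since ${\rm tr}\,A^{\rm T}={\rm tr}\,A=0$ thanks to the normalization \eqref{tr=0}, we conclude ${\rm tr}(f'(x))^4=4\bigl(p^2(x)+Dq^2(x)\bigr)$. Combined with ${\rm tr}(f'(x))^2=4p(x)$ and substituted into the Newton identity, this yields $\det f'(x)=2p^2(x)-\bigl(p^2(x)+Dq^2(x)\bigr)=p^2(x)-Dq^2(x)$. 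There is no serious obstacle; the argument is a purely algebraic bookkeeping exercise, and the one place deserving attention is the reduction of Newton's identity for a Hamiltonian matrix in dimension four, which immediately produces the desired sign.
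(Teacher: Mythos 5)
Your proof is correct, but it follows a genuinely different route from the paper's. The paper proves the identity by factoring $16(p^2-Dq^2)$ as $\bigl({\rm tr}(f')^2-{\rm tr}(f'g')\bigr)\bigl({\rm tr}(f')^2+{\rm tr}(f'g')\bigr)$, rewriting this (using ${\rm tr}(f')^2={\rm tr}(g')^2$) as $\tfrac14 {\rm tr}(f'-g')^2\,{\rm tr}(f'+g')^2$, and then extracting this product as the coefficient of $\lambda^4$ in $\det\bigl((\lambda I-f')^2-(g')^2\bigr)=\lambda^4\det(\lambda I-2f')$, which gives $\det(2f')=16\det f'$; the key inputs are $\det(f'\pm g')=0$ and $(f')^2=(g')^2$. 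You instead combine the vanishing of odd traces of the Hamiltonian matrix $f'$ with Newton's identities, $\det f'=\tfrac18\bigl({\rm tr}(f')^2\bigr)^2-\tfrac14{\rm tr}(f')^4$, and then compute ${\rm tr}(f')^4=4\bigl(p^2+Dq^2\bigr)$ by squaring the decomposition $(f')^2=pI+qA^{\rm T}$ of Lemma \ref{lemma f'f'} and using $(A^{\rm T})^2=DI$ together with ${\rm tr}\,A=0$. There is no circularity, since Lemma \ref{lemma f'f'} precedes this one and does not depend on it. Your argument is more elementary and makes transparent exactly which structural facts are used (Hamiltonianity for the odd traces, the normalization ${\rm tr}\,A=0$, and the two-term decomposition of $(f')^2$); the paper's argument avoids invoking Lemma \ref{lemma f'f'} and the trace-free normalization, getting the determinant directly from the degenerate characteristic polynomials of $f'\pm g'$. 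Both are valid under the standing assumptions of Section \ref{sect vector fields}.
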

\begin{proof}
We transform its left-hand side of this identity as follows, using ${\rm tr} (f')^2={\rm tr} (g')^2$:
\begin{eqnarray}
\lefteqn{\Big({\rm tr} (f')^2-{\rm tr} (f'g')\Big)\Big({\rm tr} (f')^2+{\rm tr} (f'g')\Big)} \nonumber \\
& = & \Big(\frac{1}{2}{\rm tr} (f')^2+\frac{1}{2}{\rm tr} (g')^2-{\rm tr} (f'g')\Big)\Big(\frac{1}{2}{\rm tr} (f')^2+\frac{1}{2}{\rm tr} (g')^2+{\rm tr} (f'g')\Big) \nonumber \\
& = & \frac{1}{4}{\rm tr} (f'-g')^2\ {\rm tr} (f'+g')^2 \label{aux1}
\end{eqnarray}
Due to \eqref{char pol f'-g'}, \eqref{char pol f'+g'}, we find that the right-hand side of \eqref{aux1} is equal to the coefficient by $\lambda^4$ in
$$
\det\big(\lambda I- (f'-g'))(\lambda I- (f'+g')\big)=\det\big((\lambda I-f')^2-(g')^2\big)=\lambda^4\det(\lambda I-2f').
$$
(at the last step we used \eqref{f'f'=g'g'}). The latter coefficient is equal to $\det(2f')=16\det f'$, which finishes the proof.
\end{proof}

\begin{lemma} \label{lemma grads pq}
The following identity holds true:
\beq \label{grads pq}
\nabla p(x)=A \nabla q(x).
\eeq
As a consequence, both quadratic polynomials $p(x)$ and $q(x)$ satisfy the second-order differential equations \eqref{d2H}:
\beq  \label{d2 pq}
A(\nabla^2 p)=(\nabla^2 p) A^{\rm T}, \quad A(\nabla^2 q)=(\nabla^2 q) A^{\rm T}.
\eeq
\end{lemma}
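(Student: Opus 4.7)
The plan is to establish \eqref{grads pq} by a direct calculation that reduces to one clean matrix identity, and then to derive \eqref{d2 pq} as an immediate consequence via one more differentiation.

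First, I would express $\nabla p$ and $\nabla q$ in terms of the Hessian $M(x) = \nabla^2 H(x)$, which is symmetric and linear in $x$. Substituting $f'=JM$ into $p = \frac{1}{4}{\rm tr}(f')^2$ and differentiating directly gives $\partial_l p = \frac{1}{2}{\rm tr}(JMJ\cdot \partial_l M)$. For $q = \frac{1}{4\alpha}{\rm tr}(f'g')$ with $g'=\alpha^{-1}JAM$, the two terms arising from the product rule turn out to be equal after cyclic trace manipulation, using $A^{\rm T}J = JA$, its equivalent form $AJ = JA^{\rm T}$, and the differentiated version $A\,\partial_l M = \partial_l M \cdot A^{\rm T}$ of \eqref{d2H}; one obtains $\partial_l q = \frac{1}{2D}{\rm tr}(JMJA\cdot \partial_l M)$.

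The core of the proof is the matrix identity
\[
A^{\rm T}(JMJ)A = D \cdot JMJ.
\]
This follows in two moves: the relations $A^{\rm T}J=JA$ and $AJ=JA^{\rm T}$ let one rewrite $A^{\rm T}JMJA = J(AMA^{\rm T})J$, and then \eqref{d2H} together with $(A^{\rm T})^2 = DI$ (the transpose of \eqref{A square}) collapses $AMA^{\rm T} = M(A^{\rm T})^2 = DM$.

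To conclude $\nabla p = A\nabla q$, I would compute $(A\nabla q)_l = \sum_k A_{lk}\partial_k q$ directly from the formula above. Using the full symmetry of the third-derivative tensor $T_{ijk}=\partial_i\partial_j\partial_k H$ together with the componentwise form $\sum_s A_{is}T_{sjt} = \sum_s A_{js}T_{ist}$ of \eqref{d2H}, the outer $A$-contraction can be transferred onto the matrix $JMJA$ under the trace, turning it into $A^{\rm T}JMJA$. The key identity above replaces this by $D\cdot JMJ$, the factor $D$ cancels, and what remains is exactly $\frac{1}{2}{\rm tr}(JMJ\cdot \partial_l M) = \partial_l p$.

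Finally, \eqref{d2 pq} follows from \eqref{grads pq} by one more differentiation. Taking $\partial_k$ of $\partial_i p = \sum_j A_{ij}\partial_j q$ in components gives $\nabla^2 p = A\nabla^2 q$; symmetry of $\nabla^2 p$ then forces $A\nabla^2 q = \nabla^2 q \cdot A^{\rm T}$, which is the equation for $q$. The equation for $p$ follows by inverting \eqref{grads pq} via $A^{-1} = D^{-1}A$ from \eqref{A square} to obtain $\nabla q = D^{-1}A\nabla p$, and repeating the argument.

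The main obstacle will be the index bookkeeping in the step where the outer $A$-contraction is moved inside the trace, which requires the full symmetry of $T$ together with the contracted form of \eqref{d2H}; this is routine but error-prone. The conceptual crux is the identity $AMA^{\rm T}=DM$, which marries the matrix Laplace equation \eqref{d2H} with the involutive property $A^2=DI$.
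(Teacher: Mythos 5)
Your proof is correct and follows essentially the same route as the paper's: both differentiate the trace formulas for $p$ and $q$, move the outer $A$-contraction through the fully symmetric third-derivative tensor using the differentiated form of \eqref{d2H}, and collapse the result with $A^2=DI$. The only difference is presentational --- the paper carries out the computation in explicit components of $f=J\nabla H$ using \eqref{Af' comp}, \eqref{Af'' comp}, whereas you package the same ingredients into the matrix identity $A^{\rm T}(J\nabla^2H\,J)A=D\,J\nabla^2H\,J$ and trace cyclicity.
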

\begin{proof}
We use the characteristic property \eqref{Af'} of the matrix $f'$ which in components reads:
\beq\label{Af' comp}
\sum_{j}(A^{\rm T})_{ij}\frac{\partial f_j}{\partial x_m}=\sum_{j}\frac{\partial f_i}{\partial x_j}(A^{\rm T})_{jm}\quad \forall i,m,
\eeq 
as well as its derivative with respect to $x_\ell$: 
\beq\label{Af'' comp}
\sum_{j}(A^{\rm T})_{ij}\frac{\partial^2 f_j}{\partial x_m\partial x_\ell}=\sum_{j}\frac{\partial^2 f_i}{\partial x_j\partial x_\ell}(A^{\rm T})_{jm}\quad \forall i,m,\ell.
\eeq 

We compute the components of $\nabla q$:
\begin{equation}\label{grad q comp}
\frac{\partial q}{\partial x_\ell}=
\frac{1}{4D}\frac{\partial\ {\rm tr}(A^{\rm T}(f')^2)}{\partial x_\ell}=
\frac{1}{4D}\sum_{i,j,m}(A^{\rm T})_{ij}\frac{\partial^2 f_j}{\partial x_m\partial x_\ell}\frac{\partial f_m}{\partial x_i}
+\frac{1}{4D}\sum_{i,j,m}(A^{\rm T})_{ij}\frac{\partial f_j}{\partial x_m}\frac{\partial^2 f_m}{\partial x_i\partial x_\ell}.
\end{equation}
The contribution to
\beq \label{A grad q comp}
(A\nabla q)_k=\sum_\ell (A^{\rm T})_{\ell k}\frac{\partial q}{\partial x_\ell}
\eeq
of the first sum in \eqref{grad q comp} is
\begin{eqnarray}
\lefteqn{\frac{1}{4D}\sum_\ell (A^{\rm T})_{\ell k} \sum_{i,m}\left(\sum_j (A^{\rm T})_{ij}\frac{\partial^2 f_j}{\partial x_m\partial x_\ell}\right)\frac{\partial f_m}{\partial x_i} \qquad {\rm (use\; \eqref{Af'' comp})}}\nonumber \\
& = & \frac{1}{4D}\sum_\ell (A^{\rm T})_{\ell k} \sum_{i,m}\left(\sum_j\frac{\partial^2 f_i}{\partial x_j\partial x_m} (A^{\rm T})_{j\ell}\right)\frac{\partial f_m}{\partial x_i}\nonumber\\
& = & \frac{1}{4D}\sum_{i,j,m} \frac{\partial^2 f_i}{\partial x_j\partial x_m}\frac{\partial f_m}{\partial x_i}\sum_\ell(A^{\rm T})_{\ell k}(A^{\rm T})_{j\ell}
\qquad {\rm (use}\; (A^{\rm T})^2=D I)
\nonumber\\
& = & \frac{1}{4}\sum_{i,j,m} \frac{\partial^2 f_i}{\partial x_j\partial x_m}\frac{\partial f_m}{\partial x_i}\delta_{jk} \; = \; 
\frac{1}{4}\sum_{i,m} \frac{\partial^2 f_i}{\partial x_k\partial x_m}\frac{\partial f_m}{\partial x_i}.
\end{eqnarray}
Similarly, the contribution of the second sum in \eqref{grad q comp} to \eqref{A grad q comp} is
\begin{eqnarray}
\lefteqn{\frac{1}{4D}\sum_\ell (A^{\rm T})_{\ell k} \sum_{i,m}\left(\sum_j (A^{\rm T})_{ij}\frac{\partial f_j}{\partial x_m}\right)\frac{\partial^2 f_m}{\partial x_i\partial x_\ell} \qquad {\rm (use\; \eqref{Af' comp})}}\nonumber \\
& = & \frac{1}{4D}\sum_\ell (A^{\rm T})_{\ell k} \sum_{i,m}\left(\sum_j \frac{\partial f_i}{\partial x_j}(A^{\rm T})_{jm}\right)\frac{\partial^2 f_m}{\partial x_i\partial x_\ell}\nonumber\\
& = & \frac{1}{4D}\sum_\ell (A^{\rm T})_{\ell k} \sum_{i,j} \frac{\partial f_i}{\partial x_j}\left(\sum_m (A^{\rm T})_{jm}\frac{\partial^2 f_m}{\partial x_i\partial x_\ell}\right) \qquad {\rm (use\; \eqref{Af'' comp})}\nonumber\\
& = & \frac{1}{4D}\sum_\ell (A^{\rm T})_{\ell k} \sum_{i,j} \frac{\partial f_i}{\partial x_j}\left(\sum_m \frac{\partial^2 f_j}{\partial x_i\partial x_m}
(A^{\rm T})_{m\ell}\right) \nonumber\\
& = & \frac{1}{4D}\sum_{i,j,m} \frac{\partial f_i}{\partial x_j} \frac{\partial^2 f_j}{\partial x_i\partial x_m}\sum_\ell(A^{\rm T})_{\ell k}(A^{\rm T})_{m\ell}
\qquad {\rm (use}\; (A^{\rm T})^2=D I)
\nonumber\\
& = & \frac{1}{4}\sum_{i,j,m} \frac{\partial f_i}{\partial x_j} \frac{\partial^2 f_j}{\partial x_i\partial x_m}\delta_{km} \; = \; 
\frac{1}{4}\sum_{i,j} \frac{\partial f_i}{\partial x_j} \frac{\partial^2 f_j}{\partial x_i\partial x_k} .
\end{eqnarray}
Collecting all the results, we find:
$$
(A\nabla q)_k=\frac{1}{4}\sum_{i,m} \frac{\partial^2 f_i}{\partial x_k\partial x_m}\frac{\partial f_m}{\partial x_i}+
\frac{1}{4}\sum_{i,j} \frac{\partial f_i}{\partial x_j} \frac{\partial^2 f_j}{\partial x_i\partial x_k}=\frac{1}{4}\frac{\partial\ {\rm tr}((f')^2)}{\partial x_k}=(\nabla p)_k,
$$
which finishes the proof.
\end{proof}

\section{Commutativity of maps}
\label{sect commut}

\begin{theorem}
The maps
\bea
&& \Phi_{f}: x \mapsto \t x = 
\left( I - \ep f'(x) \right)^{-1} x =
\left( I + \ep f'(\t x) \right) x, \label{eq: Phi1} \\
&& \Phi_{g}: x \mapsto \widehat x = 
\left( I - \ep g'(x) \right)^{-1} x =
\left( I + \ep g'(\widehat x) \right) x, \label{eq: Phi2} 
\eea
commute:  $\Phi_{f} \circ \Phi_{g}=\Phi_{g} \circ \Phi_{f}$.
\end{theorem}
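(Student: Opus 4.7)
The plan is to verify the identity by direct computation, exploiting the closed-form expression for $(I - \epsilon f'(x))^{-1}$ afforded by Lemma \ref{lemma f'f'}. The guiding point is that the matrices $\{I,\, A^T,\, f'(x),\, g'(x)\}$ generate a commutative algebra at each fixed $x$, inside which both Kahan maps admit compact formulas.

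First, using $(f'(x))^2 = p(x) I + q(x) A^T$ together with $(A^T)^2 = D I$, one inverts $I - \epsilon f'(x)$ to obtain
\[
(I - \epsilon f'(x))^{-1} = \frac{1}{\Delta(x)}\bigl(I + \epsilon f'(x)\bigr)\bigl[(1 - \epsilon^2 p(x))\,I + \epsilon^2 q(x)\, A^T\bigr],
\]
where $\Delta(x) = (1 - \epsilon^2 p(x))^2 - D\epsilon^4 q(x)^2$. By Corollary \ref{th1} the same $\Delta(x)$ governs $(I - \epsilon g'(x))^{-1}$, and the analogous closed form for $\Phi_g$ is obtained simply by replacing $f, f'$ with $g = \alpha^{-1} A^T f$, $g' = \alpha^{-1} A^T f'$, while $p, q, \Delta$ remain the same.

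Next, introduce the spectral projectors $P_\pm = \tfrac12(I \pm \alpha^{-1} A^T)$ of $A^T$ (working over $\mathbb{C}$ when $D<0$), which are complementary because $(\alpha^{-1} A^T)^2 = I$. Setting $\beta_\pm(x) = 1 - \epsilon^2 p(x) \pm \alpha \epsilon^2 q(x)$ so that $\Delta = \beta_+ \beta_-$, the above rearranges into the clean form
\[
\Phi_f(x) = \frac{P_+\bigl(x + 2\epsilon f(x)\bigr)}{\beta_-(x)} + \frac{P_-\bigl(x + 2\epsilon f(x)\bigr)}{\beta_+(x)},
\]
and analogously for $\Phi_g$ with $f \to g$. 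Since $P_+ f(x) = P_+ g(x)$ and $P_- f(x) = -P_- g(x)$, one reads off that $P_+ \Phi_f(x) = P_+ \Phi_g(x)$, so the two maps already agree on the $V_+$-block after a single step; the question reduces to analysing the two pieces separately.

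The concluding step is to verify that $P_\pm \Phi_g(\Phi_f(x)) = P_\pm \Phi_f(\Phi_g(x))$. Substituting $\tilde x = \Phi_f(x)$ into the closed form for $\Phi_g$, and $\hat x = \Phi_g(x)$ into the closed form for $\Phi_f$, each of the two projected identities reduces to a polynomial identity in $x$ that should follow from the commutation relations from Section \ref{sect vector fields}, in particular $A^T f'(x) = f'(x) A^T$, $f'(x) g'(x) = g'(x) f'(x)$, $(f'(x))^2 = (g'(x))^2$, and $f'(x) g(x) = g'(x) f(x)$. The main obstacle is that the scalars $p(x), q(x)$, and hence $\beta_\pm(x)$, are not invariants of either map, so one must carefully track how $p, q, f, g$ transform under $\Phi_f$ and $\Phi_g$ and show that the mismatches $p(\tilde x) - p(\hat x)$, $q(\tilde x) - q(\hat x)$, and differences of the vector-field values at $\tilde x$ versus $\hat x$ enter the composition formulas in precisely the combinations that are symmetric in $f \leftrightarrow g$, leaving $\Phi_g \circ \Phi_f - \Phi_f \circ \Phi_g = 0$ after all cancellations inside the common commutative algebra.
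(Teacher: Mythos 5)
Your setup is sound and the projector decomposition is an attractive reformulation: the closed form for $(I-\epsilon f'(x))^{-1}$ obtained from Lemma \ref{lemma f'f'}, the factorization $\Delta=\beta_+\beta_-$, and the relations $P_+f=P_+g$, $P_-f=-P_-g$ all check out, and your resulting formula for $\Phi_f(x)$ is correct. But the argument stops exactly where the real work begins. The final paragraph only asserts that the projected identities $P_\pm\Phi_g(\Phi_f(x))=P_\pm\Phi_f(\Phi_g(x))$ ``reduce to a polynomial identity that should follow from the commutation relations'' and then lists the difficulties without resolving them. This is not a routine verification: since $f$, $p$, $q$ are not invariant under the maps and the $\pm$ blocks do not decouple (each of $f(\t x)$, $p(\t x)$, $q(\t x)$ depends on the full vector $\t x$, not on $P_\pm\t x$ separately), comparing $\Phi_g(\t x)$ with $\Phi_f(\whx)$ forces you to expand quantities such as $f'(\t x)-f'(\whx)$, and the pointwise first-derivative identities \eqref{Af'}, \eqref{vf comm}, \eqref{f'f=g'g}, \eqref{f'g'=g'f'}, \eqref{f'f'=g'g'} that you cite are not enough by themselves for this: you must control the second-derivative tensors.

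Concretely, the missing ingredient is the intertwining relation $g''(x)v=\alpha^{-1}f''(x)(A^{\rm T}v)$ --- Lemma \ref{lemma g''} of the paper, obtained by differentiating \eqref{Af'} --- together with the observation that $f''$ and $g''$ are constant, so that, e.g., $f'(\t x)=f'(x)+2\ep f''\left(I-\ep f'(x)\right)^{-1}f(x)$. The paper's proof first reduces the matrix identity behind commutativity to the linear-in-$\ep$ equation $-f'(\whx)+g'(\t x)=g'(\whx)-f'(\t x)$ and then closes it precisely with these two facts combined with \eqref{vf comm} and \eqref{f'f=g'g}. Your projector language could certainly be used to organize that same computation, but as written the proposal has a genuine gap at its final, and essential, step.
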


\begin{proof}
We have:
\begin{equation}\label{eq: hat tilde}
\left(\Phi_{g} \circ \Phi_{f}\right)(x)=\left( I - \ep g'(\t x) \right)^{-1} \left( I + \ep f'(\t x) \right) x,
\end{equation}
and
\begin{equation}\label{eq: tilde hat}
\left(\Phi_{f} \circ \Phi_{g}\right)(x)=\left( I - \ep f'(\widehat x) \right)^{-1} \left( I + \ep g'(\widehat x) \right) x.
\end{equation}
We prove the following matrix equation:
\beq \label{det}
\left( I - \ep g'(\t x) \right)^{-1} \left( I + \ep f'(\t x) \right) =\left( I - \ep f'(\widehat x) \right)^{-1} \left( I + \ep g'(\widehat x) \right),
\eeq
which is stronger than the vector equation $\left(\Phi_{f}\circ\Phi_{g}\right)(x)=\left(\Phi_{g}\circ\Phi_{f}\right)(x)$ expressing commutativity. Equation \eqref{det} is equivalent to
\beq \label{det 1}
\left( I - \ep f'(\widehat x) \right) \left( I - \ep g'(\t x) \right)^{-1} =
  \left( I + \ep g'(\widehat x) \right)\left( I + \ep f'(\t x) \right)^{-1}.
\eeq
From \eqref{f'f'=g'g'} we find:
\[
\left( I - \ep g'(\t x) \right)^{-1}=\left( I + \ep g'(\t x) \right)\left( I - \ep^2 (f'(\t x))^2 \right)^{-1},
\]
\[
\left( I + \ep f'(\t x) \right)^{-1}=\left( I - \ep f'(\t x) \right)\left( I - \ep^2(f'(\t x))^2 \right)^{-1}.
\]
With this at hand, equation \eqref{det 1} is equivalent to
\[
\left( I - \ep f'(\widehat x) \right)\left( I + \ep g'(\t x) \right)=
\left( I + \ep g'(\widehat x) \right)\left( I - \ep f'(\t x) \right).
\]
Here the quadratic in $\epsilon$ terms cancel by virtue of \eqref{f' vs g'} and \eqref{Af'}:
\[
f'(\widehat x)g'(\t x) = \alpha^{-1}f'(\widehat x)A^{\rm T}f'(\t x) =  \alpha^{-1} A^{\rm T}f'(\widehat x)f'(\t x)  =  g'(\widehat x)f'(\t x),
\]
so that we are left with the terms linear in $\epsilon$:
\beq \label{aux}
-  f'(\widehat x)+ g'(\t x)= g'(\widehat x)- f'(\t x).
\eeq
Since the tensors $f''$, $g''$ are constant, we have:
\[
f'(\widehat x)=f'(x)+f''(\widehat x-x)=f'(x)+2\ep f''\left(I-\ep g'(x)\right)^{-1}g(x),
\]
\[
g'(\widehat x)=g'(x)+g''(\widehat x-x)=g'(x)+2\ep g''\left(I-\ep g'(x)\right)^{-1}g(x),
\]
\[
f'(\t x)=f'(x)+f''(\t x-x)=f'(x)+2\ep f''\left(I-\ep f'(x)\right)^{-1}f(x),
\]
\[
g'(\t x)=g'(x)+g''(\t x-x)=g'(x)+2\ep g''\left(I-\ep f'(x)\right)^{-1}f(x).
\]
Thus, equation \eqref{aux} is equivalent to 
\begin{align} \label{aux2}
& f''\left(I-\ep g'(x)\right)^{-1}g(x)+g''\left(I-\ep g'(x)\right)^{-1}g(x) = \nonumber\\
& \qquad f''\left(I-\ep f'(x)\right)^{-1}f(x)+g''\left(I-\ep f'(x)\right)^{-1}f(x).
\end{align}
At this point, we use the following statement.

\begin{lemma} \label{lemma g''}
For any vector $v\in\mathbb C^4$ we have:
\beq \label{g''}
 g''(x)v=\alpha^{-1}f''(x)(A^{\rm T}v), \quad  f''(x)v=\alpha^{-1}g''(x)(A^{\rm T}v).
\eeq
\end{lemma}

\noindent
We compute the matrices on the left-hand side of \eqref{aux2} with the help of \eqref{g''}, \eqref{f vs g}, \eqref{f' vs g'}:
\begin{eqnarray*}
f''\left(I-\ep g'(x)\right)^{-1}g(x) & = & f''\left(I-\ep^2(f'(x))^2\right)^{-1}\left(g(x)+\ep g'(x)g(x)\right),\\
g''\left(I-\ep g'(x)\right)^{-1}g(x) & = & \alpha^{-1}f''\left(I-\ep^2(f'(x))^2\right)^{-1}A^{\rm T}\left(g(x)+\ep g'(x)g(x)\right)\\
                                                              & = & f''\left(I-\ep^2 (f'(x))^2\right)^{-1}\left(f(x)+\ep f'(x)g(x)\right),
\end{eqnarray*}
and similarly    
\begin{eqnarray*}
f''\left(I-\ep f'(x)\right)^{-1}f(x) & = & f''\left(I-\ep^2 (f'(x))^2\right)^{-1}\left(f(x)+\ep f'(x)f(x)\right)\\
g''\left(I-\ep f'(x)\right)^{-1}f(x) & = & \alpha^{-1}f''\left(I-\ep^2 (f'(x))^2\right)^{-1}A^{\rm T}\left(f(x)+\ep f'(x)f(x)\right)\\
                                                              & = & f''\left(I-\ep^2 (f'(x))^2\right)^{-1}\left(g(x)+\ep g'(x)f(x)\right).
\end{eqnarray*}                                                          
Collecting all the results and using \eqref{vf comm} and \eqref{f'f=g'g}, we see that the proof is complete. 
\end{proof}

{\em Proof of Lemma \ref{lemma g''}.} The identities in question are equivalent to
\beq \label{Af''}
 A^{\rm T}(f''(x)v)=f''(x)(A^{\rm T}v), \quad  A^{\rm T}(g''(x)v)=g''(x)(A^{\rm T}v).
\eeq
(Actually, both tensors $f''$ and $g''$ are constant, i.e., do not depend on $x$.) To prove the latter identities, we start with equation \eqref{Af'} written in components:
\[
\sum_k (A^{\rm T})_{ik}\frac{\partial f_k}{\partial x_\ell}=\sum_k \frac{\partial f_i}{\partial x_k}(A^{\rm T})_{k\ell}.
\]
Differentiating with respect to $x_j$, we get:
\[
\sum_k (A^{\rm T})_{ik}\frac{\partial^2 f_k}{\partial x_j\partial x_\ell}=\sum_k \frac{\partial f_i}{\partial x_j\partial x_k}(A^{\rm T})_{k\ell}.
\]
Hence,
\[
\sum_{k,\ell} (A^{\rm T})_{ik}\frac{\partial^2 f_k}{\partial x_j\partial x_\ell}v_\ell=\sum_{k,\ell} \frac{\partial f_i}{\partial x_j\partial x_k}(A^{\rm T})_{k\ell}v_\ell,
\]
which is nothing but the $(i,j)$ entry of the matrix identity \eqref{Af''}. \qed

\section{Integrals of motion}
\label{sect integrals}

\begin{theorem} \label{th integrals}
The maps $\Phi_{f}$ and $\Phi_{g}$ share two functionally independent conserved quantities 
\begin{equation}\label{tH}
\t H(x,\epsilon)=\epsilon^{-1}x^{\rm T}J\ \t x=\epsilon^{-1}x^{\rm T}J\left(I-\ep f'(x)\right)^{-1}x
\end{equation}
and 
\begin{equation}\label{tK}
\t K(x,\epsilon)=\epsilon^{-1}\alpha x^{\rm T}J\ \widehat x=\epsilon^{-1}\alpha x^{\rm T}J\left(I-\ep g'(x)\right)^{-1}x.
\end{equation}
\end{theorem}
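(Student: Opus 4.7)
My proof plan is first to identify $\tilde H$ and $\tilde K$, up to overall constants, as the CMOQ integrals for $\Phi_f$ and $\Phi_g$ respectively (this handles two of the four invariance claims), and then to obtain the two cross-invariances by a direct algebraic computation that bootstraps from the commutativity of Theorem~2 and the identities of Section~\ref{sect vector fields}.

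For the standard symplectic $J$ one has $J^2=-I$, hence $J^{-1}=-J$. Consequently $\tilde H(x,\epsilon)=\epsilon^{-1}x^{\rm T}J\tilde x$ equals $-6$ times the CMOQ integral \eqref{eq: CMOQ integral} of $\Phi_f$, and is therefore $\Phi_f$-invariant. Analogously, since $g(x)=J\nabla\bigl(\alpha^{-1}K(x)\bigr)$ is the Hamiltonian vector field of the cubic $\alpha^{-1}K$, the CMOQ formula applied to $\Phi_g$ gives the integral $\tfrac{1}{6\epsilon}x^{\rm T}J^{-1}\hat x$; multiplying by $-6\alpha$ yields $\tilde K(x,\epsilon)$, which is thus $\Phi_g$-invariant.

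For the two cross-invariances, the $f\leftrightarrow g$ symmetry of the construction, manifest in \eqref{f vs g}--\eqref{f' vs g'}, reduces them to a single statement, say $\tilde H\circ\Phi_g=\tilde H$. I would attack this by direct computation. Using commutativity $\Phi_f(\hat x)=\Phi_g(\tilde x)$ from Theorem~2, one writes $\tilde H(\hat x,\epsilon)=\epsilon^{-1}\hat x^{\rm T}J\,\Phi_g(\tilde x)$; writing $\hat x^{\rm T}J=x^{\rm T}J(I+\epsilon g'(x))^{-1}$ (a consequence of $(g'(x))^{\rm T}J=-Jg'(x)$ from \eqref{Ham f'} applied to $\hat x=(I-\epsilon g'(x))^{-1}x$), the desired equality $\tilde H(\hat x,\epsilon)=\tilde H(x,\epsilon)$ reduces to the matrix identity
\[
x^{\rm T}J\bigl[(I+\epsilon g'(x))^{-1}(I-\epsilon g'(\tilde x))^{-1}-I\bigr]\tilde x=0.
\]
One then exploits the factorization $(f')^2=(g')^2=pI+qA^{\rm T}$ from Lemma~\ref{lemma f'f'} together with $(A^{\rm T})^2=DI$ of \eqref{A square} to make the resolvents $(I\pm\epsilon g'(x))^{-1}$ and $(I\pm\epsilon g'(\tilde x))^{-1}$ explicit as rational expressions in $\epsilon^2$, and uses the Taylor expansion $g'(\tilde x)=g'(x)+g''(\tilde x-x)$ combined with Lemma~\ref{lemma g''} in the form $g''(v)=\alpha^{-1}f''(A^{\rm T}v)$ to push all the matrices to a common base point $x$. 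After this, \eqref{cond A} (the antisymmetry of $JA^{\rm T}=AJ$), the commutation $A^{\rm T}f'=f'A^{\rm T}$ from \eqref{Af'}, and the Hamiltonian identities \eqref{Ham f'} are expected to collapse the bracket, with the key cancellation mirroring the $\epsilon^2$-cancellation in the proof of Theorem~2.

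Functional independence is immediate: expanding $\tilde x=x+2\epsilon f(x)+O(\epsilon^2)$ and using Euler's identity for cubic $H$ yields $\tilde H(x,0)=-6H(x)$ and analogously $\tilde K(x,0)=-6K(x)$, which are functionally independent by Proposition~1; continuity in $\epsilon$ preserves this for small $\epsilon$. The principal obstacle is the final algebraic manipulation in the cross-invariance step: while all the needed algebraic ingredients live in Section~\ref{sect vector fields}, the fact that the identity to be verified mixes evaluations at the two points $x$ and $\tilde x$ means that the Hamiltonian and $A$-duality identities cannot be applied at a single base point, and the bookkeeping required to arrange the full cancellation is expected to be the most intricate computation so far in the paper.
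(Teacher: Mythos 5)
The easy half of your argument is fine: since $J^{-1}=-J$ for the standard symplectic matrix, $\t H$ is $-6$ times the integral \eqref{eq: CMOQ integral} of $\Phi_f$, and $\t K$ is $-6\alpha$ times the corresponding integral of $\Phi_g$ (applied to the cubic Hamiltonian $\alpha^{-1}K$), so the two ``diagonal'' invariances follow from \cite{CMOQ1}; the $f\leftrightarrow g$ symmetry of \eqref{f vs g} does reduce the two cross-invariances to one; and the functional-independence argument via $\t H=-6H+O(\epsilon^2)$, $\t K=-6K+O(\epsilon^2)$ is the intended one. The genuine gap is that the cross-invariance --- the actual content of the theorem --- is never proved. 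You reduce it to the identity $x^{\rm T}J\bigl[(I+\epsilon g'(x))^{-1}(I-\epsilon g'(\t x))^{-1}-I\bigr]\t x=0$ and then assert that the identities of Section \ref{sect vector fields} ``are expected to collapse the bracket''; your own closing paragraph concedes the computation has not been carried out. The difficulty you flag is real and is not merely bookkeeping: your expression contains resolvents based at the two different points $x$ and $\t x$, and while $A^{\rm T}$ commutes with $f'$ and $g'$ evaluated anywhere, the Jacobians $g'(x)$ and $g'(\t x)$ themselves need not commute (only their anticommutator is controlled by polarizing Lemma \ref{lemma f'f'}), so nothing in your toolkit forces the product of the two resolvents to simplify.

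The device you are missing is the paper's preliminary rewriting of the integrals in even form: because $x^{\rm T}\nabla^2H(J\nabla^2H)^{k-1}x$ vanishes for even $k$ (the matrix is then skew-symmetric), one has $\t K(x,\epsilon)=\alpha x^{\rm T}J\left(I-\epsilon^2(g'(x))^2\right)^{-1}g'(x)x$, formula \eqref{tK alt}. Evaluating this at $\t x=(I+\epsilon f'(\t x))x$ and using \eqref{Ham f'}, \eqref{f'g'=g'f'}, \eqref{f'f'=g'g'}, the conjugation by $I\pm\epsilon f'(\t x)$ cancels the resolvent \emph{exactly}, leaving the single-matrix expression $\t K(\t x,\epsilon)=\alpha x^{\rm T}Jg'(\t x)x$. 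Only then does one expand $g'(\t x)=g'(x)+2\epsilon g''(I-\epsilon f'(x))^{-1}f(x)$: the odd-in-$\epsilon$ contribution vanishes because $A^{\rm T}\left(I-\epsilon^2(f'(x))^2\right)^{-1}J$ is skew-symmetric, and the even part reassembles into $\t K(x,\epsilon)$ via $g''vx=g'(x)v$ and \eqref{f'f=g'g}. Everything ends up at the single base point $x$, which is precisely what your route fails to achieve. Note also that the paper's proof does not use the commutativity of the maps at all, whereas your reduction makes Theorem \ref{th integrals} depend on Theorem 2; that dependence is avoidable and, given that the key cancellation is missing, your proposal as it stands does not establish the theorem.
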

Before proving this theorem, we observe different expressions for these functions. Expanding \eqref{tH} in power series with respect to $\epsilon$, we see:
$$
\t H(x, \epsilon) = \sum_{k=0}^\infty \epsilon^{k-1} x^{\rm T}J(f'(x))^kx=-\sum_{k=0}^\infty \epsilon^{k-1}x^{\rm T} \Big(\nabla^2 H(x) J\nabla^2 H(x) \cdots J\nabla^2 H(x)\Big) x.
$$
The matrix in the parentheses (involving $k$ times $\nabla^2 H(x)$ and $k-1$ times $J$) is symmetric if $k$ is odd, and skew-symmetric if $k$ is even. Therefore, all terms with even $k$ vanish, and we arrive at
$$
\t H(x, \epsilon) = \sum_{k=0}^\infty \epsilon^{2k} x^{\rm T}J(f'(x))^{2k+1}x,
$$
or, finally,
\begin{equation} \label{tH alt}
\t H(x, \epsilon) =x^{\rm T} J\left(I-\epsilon^2(f'(x))^2\right)^{-1} f'(x) x =
2x^{\rm T} J\left(I-\epsilon^2(f'(x))^2\right)^{-1} f(x) .
\end{equation}
At the last step we used that $f(x)$ is homogeneous of degree 2, so that $f'(x)x=2f(x)$.  Of course, analogous expressions hold true for the function $\t K(x,\epsilon)$:
\begin{equation} \label{tK alt}
\t K(x, \epsilon) =\alpha x^{\rm T} J\left(I-\epsilon^2(g'(x))^2\right)^{-1} g'(x) x =
2\alpha x^{\rm T} J\left(I-\epsilon^2(g'(x))^2\right)^{-1} g(x) .
\end{equation}
Formulas \eqref{tH alt}, \eqref{tK alt} also clearly display the asymptotics
$$
\t H(x, \epsilon)=2x^{\rm T} Jf(x) +O(\epsilon^2)=-2x^{\rm T} \nabla H(x)+O(\epsilon^2)=-6H(x)+O(\epsilon^2),
$$
and analogously 
$$
\t K(x, \epsilon)=2\alpha x^{\rm T} Jg(x) +O(\epsilon^2)=-2x^{\rm T} \nabla K(x)+O(\epsilon^2)=-6K(x)+O(\epsilon^2).
$$

{\em Proof of Theorem \ref{th integrals}.}
We first show that $\t H(x,\epsilon)$ is an integral of motion of the map $\Phi_f$ (this is a result from \cite{CMOQ1}, which holds true for arbitrary Hamiltonian vector fields). For this goal, we compute with the help of \eqref{eq: Phi1}:
\begin{eqnarray*}
\t H(\t x,\epsilon) & = & \epsilon^{-1}\t x^{\rm T}J\left(I-\ep f'(\t x)\right)^{-1}\t x  \\
 & = & \epsilon^{-1}x^{\rm T}\left(I+\epsilon f'(\t x)\right)^{\rm T}J\left(I-\ep f'(\t x)\right)^{-1}\left(I-\epsilon f'(x)\right)^{-1} x .
\end{eqnarray*}
Taking into account \eqref{Ham f'} in the form $(f'(\t x))^{\rm T} J=-Jf'(\t x)$, we arrive at
\begin{equation}
\t H(\t x,\epsilon)\; =\; \epsilon^{-1}x^{\rm T}J(I-\epsilon f'(\t x))\left(I-\ep f'(\t x)\right)^{-1}\left(I-\epsilon f'(x)\right)^{-1} x \; = \; \t H(x,\epsilon).
\end{equation}

Next, we show that $\t K(x,\epsilon)$ also is an integral of motion of the map $\Phi_f$. For this goal, we first compute, based on \eqref{tK alt}:
\begin{eqnarray*}
\t K(\t x,\epsilon) & = & \alpha \t x^{\rm T}J\left(I-\ep^2 (g'(\t x))^2\right)^{-1}g'(\t x)\t x  \\
 & = &  \alpha x^{\rm T}\left(I+\epsilon f'(\t x)\right)^{\rm T}J\left(I-\ep^2 (g'(\t x))^2\right)^{-1}g'(\t x)\left(I+\epsilon f'(\t x)\right) x \\
  & = &  \alpha x^{\rm T}J\left(I-\epsilon f'(\t x)\right)\left(I-\ep^2 (g'(\t x))^2\right)^{-1}g'(\t x)\left(I+\epsilon f'(\t x)\right) x .
\end{eqnarray*}
By virtue of \eqref{f'g'=g'f'} and \eqref{f'f'=g'g'} we arrive at
\begin{equation}\label{aux3}
\t K(\t x,\epsilon)=  \alpha x^{\rm T}Jg'(\t x) x .
\end{equation}
Now we compute as in the previous section:
\begin{eqnarray}\label{aux4}
g'(\t x) & = & g'(x)+g''(\t x-x)=g'(x)+2\ep g''\left(I-\ep f'(x)\right)^{-1}f(x)\nonumber\\
 & = & g'(x)+2\ep g''\left(I-\ep^2 (f'(x))^2\right)^{-1}\left(I+\ep f'(x)\right)f(x).
\end{eqnarray}
We will show that the contribution to \eqref{aux3} of the terms in \eqref{aux4} with odd powers of $\epsilon$ vanishes:
\begin{eqnarray}\label{aux5}
\alpha x^{\rm T}J g''\left(I-\ep^2 (f'(x))^2\right)^{-1}f(x)x=0.
\end{eqnarray}
For this, we use the fact that for an arbitrary vector $v\in\mathbb C^4$  we have:
\begin{equation} \label{aux7}
g''vx=g'(x)v.
\end{equation}
Indeed,  due to homogeneity of $g'(x)$,
$$
(g''vx)_i=\sum_{j,k=1}^4 \frac{\partial^2 g_i}{\partial x_j\partial x_k}v_jx_k=\sum_{j=1}^4\frac{\partial g_i}{\partial x_j}v_j=(g'(x)v)_i.
$$
Due to \eqref{aux7}, the left-hand side of \eqref{aux5} is equal  to
\begin{eqnarray*}
\lefteqn{ \alpha x^{\rm T}J g'(x)\left(I-\ep^2 (f'(x))^2\right)^{-1}f(x) } \\
& = & -x^{\rm T}\nabla^2 K(x)\left(I-\ep^2 (f'(x))^2\right)^{-1}f(x)\\
& = & -2(\nabla K(x))^{\rm T}\left(I-\ep^2 (f'(x))^2\right)^{-1}f(x) \\
& = & -2(\nabla H(x))^{\rm T}\Big(A^{\rm T}\left(I-\ep^2 (f'(x))^2\right)^{-1} J\Big)\nabla H(x).
\end{eqnarray*}
One easily sees with the help of \eqref{Ham f'}, \eqref{Af'} and \eqref{cond A} that the matrix $A^{\rm T}\left(I-\ep^2 (f'(x))^2\right)^{-1} J$ is skew-symmetric, which finishes the proof of \eqref{aux5}. With this result, \eqref{aux3} turns into 
\[
\t K(\t x,\epsilon)= \alpha x^{\rm T}Jg'(x) x +2\alpha \epsilon^2g''\left(I-\ep^2 (f'(x))^2\right)^{-1} f'(x)f(x)x.
\]
By virtue of \eqref{f'f'=g'g'}, \eqref{f'f=g'g} and  \eqref{aux7}, we put the latter formula as
\begin{eqnarray*}
\t K(\t x,\epsilon) & = &  \alpha x^{\rm T}Jg'(x) x +2\epsilon^2 \alpha x^{\rm T}Jg''\left(I-\ep^2 (g'(x))^2\right)^{-1} g'(x)g(x)x \\
 & = & 2 \alpha x^{\rm T}Jg(x) +2\epsilon^2 \alpha x^{\rm T}Jg'(x)\left(I-\ep^2 (g'(x))^2\right)^{-1} g'(x)g(x) \\
 & = & 2 \alpha x^{\rm T} J \left(I-\ep^2 (g'(x))^2\right)^{-1} g(x) \; = \; \t K(x,\epsilon).
\end{eqnarray*}
This finishes the proof of the theorem.
\qed

\section{Invariant Poisson structure}
\label{sect symplectic}

\begin{theorem}
Both maps $\Phi_f$ and $\Phi_g$ are Poisson with respect to the brackets with the Poisson tensor $\Pi(x)$ given by
\begin{eqnarray} \label{Pi short}
\Pi(x)  & = & J- \epsilon^2 (f'(x))^2J \label{Pi short} \\
 & = & \big(1-\epsilon^2p(x)\big)J-\epsilon^2 q(x) A^{\rm T} J, \label{Pi medium}
\end{eqnarray}
where $p(x)$ and $q(x)$ are quadratic polynomials given in \eqref{p}, \eqref{q}.
\end{theorem}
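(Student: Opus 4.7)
The plan is to split the proof into two parts: first, that $\Pi(x)$ is actually a Poisson tensor (skew-symmetric and satisfies the Jacobi identity), and second, that each of the two maps preserves it. The equivalence of \eqref{Pi short} and \eqref{Pi medium} is immediate from Lemma \ref{lemma f'f'} together with the relation $A^{\rm T} J = JA$ from \eqref{cond A}. Skew-symmetry then follows from \eqref{Ham f'}, which yields $J(F^{\rm T})^2 = F^2 J$ (writing $F = f'(x)$). For the Jacobi identity, I would view \eqref{Pi medium} as a linear combination $\Pi = \phi J + \psi\,(JA)$ of the two constant skew matrices $J$ and $JA$ (the latter is skew by \eqref{cond A}), with scalar coefficients $\phi = 1 - \epsilon^2 p$ and $\psi = -\epsilon^2 q$. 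Since both $J$ and $JA$ are constant Poisson structures, the Schouten bracket $[\Pi,\Pi]$ collapses to a sum of terms each containing one of the factors $J\nabla p$, $JA\nabla p$, $J\nabla q$, $JA\nabla q$, and the identity $\nabla p = A\nabla q$ from Lemma \ref{lemma grads pq} forces their cancellation.

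For invariance, I would substitute the Jacobi matrix formula \eqref{Jac} into the condition $d\Phi_f(x)\,\Pi(x)\,d\Phi_f(x)^{\rm T} = \Pi(\tilde x)$ and invoke the Hamiltonian identity $J(I + \epsilon \tilde F^{\rm T}) = (I - \epsilon \tilde F) J$ (and its analogue at $x$), itself a direct consequence of \eqref{Ham f'}, to reduce the condition to the purely algebraic matrix identity
\[
(I + \epsilon \tilde F)(I - \epsilon^2 F^2)(I - \epsilon \tilde F) = (I - \epsilon F)(I - \epsilon^2 \tilde F^2)(I + \epsilon F),
\]
where $\tilde F = f'(\tilde x)$. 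Upon expansion, the terms of order $\epsilon^0$, $\epsilon^1$ and $\epsilon^2$ coincide on both sides, so only the cubic and quartic contributions remain to be matched.

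The crucial structural input---and what I expect to be the heart of the proof---is that $F^2$ commutes with $\tilde F$, and symmetrically $\tilde F^2$ commutes with $F$. By Lemma \ref{lemma f'f'} one has $F^2 = p(x)I + q(x)A^{\rm T}$, while \eqref{Af'} states that $A^{\rm T}$ commutes with $f'(y)$ for every $y$, in particular with $\tilde F$; hence $[F^2, \tilde F] = q(x)[A^{\rm T}, \tilde F] = 0$, and analogously $[\tilde F^2, F] = 0$. The cubic term on either side therefore equals $\epsilon^3\bigl([F^2, \tilde F] - [F, \tilde F^2]\bigr) = 0$, and the quartic terms $\tilde F F^2 \tilde F$ and $F \tilde F^2 F$ both collapse to $F^2 \tilde F^2$ (using additionally that $F^2$ and $\tilde F^2$ commute with one another, both being linear combinations of $I$ and $A^{\rm T}$). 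This establishes invariance under $\Phi_f$. For $\Phi_g$ no new computation is required: by \eqref{f'f'=g'g'} one has $g'(x)^2 = f'(x)^2$, so $\Pi$ is literally the same tensor whether one writes $f'$ or $g'$ in \eqref{Pi short}, and the argument transplants verbatim with $F, \tilde F$ replaced by $g'(x), g'(\widehat x)$.
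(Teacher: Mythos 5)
Your proposal is correct and follows essentially the same route as the paper: invariance is reduced, via \eqref{Jac} and \eqref{Ham f'}, to a matrix identity that holds because $(f'(x))^2=p(x)I+q(x)A^{\rm T}$ (Lemma \ref{lemma f'f'}) commutes with $f'$ evaluated at any point by \eqref{Af'}, and the Jacobi identity is reduced to the relation $\nabla p=A\nabla q$ of Lemma \ref{lemma grads pq}. The differences are only organizational: the paper keeps the product $\Pi(x)J$ intact instead of expanding your three-factor identity term by term, and it treats invariance before the Jacobi identity.
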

\begin{proof}
First, we prove that
\beq \label{Pois prop}
d\Phi_f(x)\Pi(x) (d\Phi_f(x))^{\rm T} =\Pi(\t x).
\eeq
With the expression \eqref{Jac} for $d\Phi_f(x)$, \eqref{Pois prop} turns into
$$
\big(I+\epsilon f'(\t x)\big)\Pi(x)\big(I+\epsilon f'(\t x)\big)^{\rm T}=\big(I-\epsilon f'(x)\big)\Pi(\t x)\big(I-\epsilon f'(x)\big)^{\rm T}.
$$
Multiplying from the right by $J$ and taking into account \eqref{Ham f'}, we arrive at:
$$
\big(I+\epsilon f'(\t x)\big)\Pi(x)J\big(I-\epsilon f'(\t x)\big)=
\big(I-\epsilon f'(x)\big)\Pi(\t x)J\big(I+\epsilon f'(x) \big).
$$
According to Lemma \ref{lemma f'f'}, the matrix $\Pi(x)J$ is a linear combination of $I$ and $A^{\rm T}$, therefore, by virtue of \eqref{Af'}, it commutes with $f'(\t x)$ (actually, with $f'$ evaluated at any point). Thus, the latter equation is equivalent to 
$$
\big(I-\epsilon (f'(\t x))^2\big)\Pi(x)J=\Pi(\t x)J\big(I-\epsilon (f'(x))^2\big),
$$
which is obviously true due to \eqref{Pi short} .

It remains to prove that $\Pi$ is indeed a Poisson tensor. For this, one has to verify the Jacobi identity
\beq \label{Jacobi id}
\{x_i,\{x_j,x_k\}\}+\{x_j,\{x_k,x_i\}\}+\{x_k,\{x_i,x_j\}\}=0
\eeq
for the four different triples of indices $\{i,j,k\}$ from $\{1,2,3,4\}$. A straightforward computation based on the expression \eqref{Pi medium} for $\Pi$ shows that the left-hand sides of the expressions \eqref{Jacobi id} are polynomials of order 2 in $\epsilon^2$, with the coefficients by $\epsilon^2$ being the corresponding components of the vector $\nabla p(x)-A\nabla q(x)$, and the coefficients by $\epsilon^4$ being linear combinations of the latter. A reference to Lemma \ref{lemma grads pq} finishes the proof.
\end{proof}


\section{Differential equations for the conserved quantities of maps $\Phi_f$, $\Phi_g$}
\label{sect diff eqs}

\begin{theorem} \label{th diff eqs}
The rational functions $\t H(x,\epsilon)$, $\t K(x,\epsilon)$ are related by
\begin{equation}\label{grad tK}
\nabla \t K(x,\epsilon) =A \nabla \t H(x,\epsilon).
\end{equation}
As a consequence, they satisfy the same second order differential equations \eqref{comp1}--\eqref{comp5} as the cubic polynomials $H(x)$, $K(x)$.
\end{theorem}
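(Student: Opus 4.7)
The plan is to obtain closed rational formulas for $\tilde H$ and $\tilde K$, and then reduce the identity $\nabla\tilde K = A\nabla\tilde H$ to a short algebraic computation powered by the two ``Cauchy--Riemann-like'' relations $\nabla K = A\nabla H$ (on $H$, $K$) and $\nabla p = A\nabla q$ from Lemma \ref{lemma grads pq}.

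For the closed forms, combining Lemma \ref{lemma f'f'} with $(A^{\rm T})^2 = DI$ (equation \eqref{A square}), one can invert $I-\epsilon^2(f'(x))^2 = uI - \epsilon^2 q(x)\,A^{\rm T}$, with $u(x,\epsilon) = 1 - \epsilon^2 p(x)$, explicitly as $(uI + vA^{\rm T})/\Delta$, where $v = \epsilon^2 q$ and $\Delta = u^2 - Dv^2$. Plugging this into formula \eqref{tH alt} and applying Euler's identity to the homogeneous cubic $H$ together with $JA^{\rm T} = AJ$ (which follows from \eqref{cond A}), so that $x^{\rm T} J f = -3H$ and $x^{\rm T} J A^{\rm T} f = -3K$, yields
\[
\tilde H(x,\epsilon) = -\frac{6\bigl(uH + vK\bigr)}{\Delta}.
\]
The analogous computation for $\tilde K$, using $(g'(x))^2 = (f'(x))^2$ from \eqref{f'f'=g'g'} and $\alpha g = A^{\rm T}f$ from \eqref{f vs g}, gives
\[
\tilde K(x,\epsilon) = -\frac{6\bigl(uK + DvH\bigr)}{\Delta}.
\]

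With these forms, the proof of $\nabla\tilde K = A\nabla\tilde H$ is direct. The inputs $\nabla K = A\nabla H$ and $\nabla p = A\nabla q$, combined with $A^2 = DI$, yield the mirror identities $A\nabla K = D\nabla H$, $A\nabla p = D\nabla q$, and hence $\nabla u = -A\nabla v$, $A\nabla u = -D\nabla v$. Writing $N_H = uH + vK$ and $N_K = uK + DvH$, I would check
\[
A\nabla N_H - \nabla N_K = -2\bigl(K\nabla u + DH\nabla v\bigr),
\]
and, using $\nabla\Delta = 2u\nabla u - 2Dv\nabla v$ and the mirror identities,
\[
N_K\nabla\Delta - N_H A\nabla\Delta = 2\Delta\bigl(K\nabla u + DH\nabla v\bigr).
\]
Substitution into the quotient-rule expression $\Delta^2\bigl(\nabla\tilde K - A\nabla\tilde H\bigr) = -6\bigl[\Delta(\nabla N_K - A\nabla N_H) + N_H A\nabla\Delta - N_K\nabla\Delta\bigr]$ yields a perfect cancellation.

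For the consequence, differentiating the identity once more gives $\nabla^2\tilde K = A\nabla^2\tilde H$. Symmetry of the Hessian $\nabla^2\tilde K$ then forces $A\nabla^2\tilde H = (A\nabla^2\tilde H)^{\rm T} = \nabla^2\tilde H\cdot A^{\rm T}$, which is precisely \eqref{d2H} for $\tilde H$ and hence the PDEs \eqref{comp1}--\eqref{comp5}; applying $A$ once more and using $A^2 = DI$ with the identity just proved delivers $A\nabla^2\tilde K = \nabla^2\tilde K \cdot A^{\rm T}$, so $\tilde K$ satisfies the same PDEs. The main obstacle I anticipate is the bookkeeping in the middle step --- keeping track of signs and ensuring that $A$ is applied to the correct factor at each stage --- rather than any deeper conceptual difficulty; the heart of the argument is really the single identity $\nabla p = A\nabla q$ of Lemma \ref{lemma grads pq}.
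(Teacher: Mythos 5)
Your proposal is correct and follows essentially the same route as the paper: both derive the closed forms $\t H = -6(uH+vK)/\Delta$, $\t K = -6(uK+DvH)/\Delta$ (the paper via $(\Pi(x))^{-1}$, you via direct inversion of $uI - \epsilon^2 q A^{\rm T}$ using $(A^{\rm T})^2=DI$) and then reduce $\nabla\t K = A\nabla\t H$ to the key identity $\nabla p = A\nabla q$ of Lemma \ref{lemma grads pq}. The only cosmetic difference is that the paper organizes the final algebra via a partial-fraction decomposition of the coefficients $r(x)$, $s(x)$ in $\t H = rH+sK$, whereas you verify the cancellation directly with the quotient rule; both computations check out.
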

\begin{proof}
We start the proof with the derivation of the following formula for $\t H(x,\epsilon)$:
\begin{equation} \label{tH thru H K}
\t H(x,\epsilon) = -6\dfrac{ (1-\epsilon^2 p(x))H(x)+\epsilon^2 q(x) K(x)}{(1-\epsilon^2 p(x))^2-\epsilon^4 \alpha^2 q^2(x) }.
\end{equation}
For this aim, we observe the following formula:
\begin{equation} \label{Pi inv short}
(\Pi(x))^{-1}  = -\frac{(1-\epsilon^2 p(x)) J + \epsilon^2 q(x) AJ}{(1-\epsilon^2 p(x))^2-\epsilon^4 \alpha^2 q^2(x) }.
\end{equation}
This is checked by a straightforward multiplication of expressions \eqref{Pi short}, \eqref{Pi inv short}. We note that, by Lemma \ref{lemma p^2-q^2}, 
the denominator in \eqref{Pi inv short} is equal to
$$
1-\frac{\epsilon^2}{2} {\rm tr} (f')^2+\epsilon^4 \det f'=\det(I-\epsilon f').
$$
Now we derive from \eqref{tH alt}, \eqref{Pi short}, \eqref{Pi medium}:
\begin{eqnarray*}
\t H(x,\epsilon) & = & 2x^{\rm T} J\left(I-\epsilon^2(f'(x))^2\right)^{-1} f(x) \\
& = & -2x^{\rm T} (\Pi(x))^{-1}f(x)\\
& = & 2\dfrac{ (1-\epsilon^2 p(x))x^{\rm T} J f(x)+ \epsilon^2 q(x) x^{\rm T} AJf(x)}{\det(I-\epsilon f'(x))}\\
& = & -2\dfrac{ (1-\epsilon^2 p(x))x^{\rm T} \nabla H(x)+ \epsilon^2 q(x) x^{\rm T} \nabla K(x)}{\det(I-\epsilon f'(x))}\\
& = & -6\dfrac{ (1-\epsilon^2 p(x))H(x)+\epsilon^2 q(x) K(x)}{\det(I-\epsilon f'(x))}, 
\end{eqnarray*}
which is formula \eqref{tH thru H K}. Interchanging the roles of $H$ and $K$, that is, replacing $(H,K)$ by $(K, \alpha^2 H)$, we find:
\begin{equation} \label{tK thru H K}
\t K(x,\epsilon) = -6\dfrac{ (1-\epsilon^2 p(x))K(x)+\epsilon^2 \alpha^2 q(x) H(x)}{(1-\epsilon^2 p(x))^2-\epsilon^4 \alpha^2 q^2(x) }.
\end{equation}
We write formulas \eqref{tH thru H K}, \eqref{tK thru H K} as
\begin{eqnarray} 
\t H(x,\epsilon) & = & r(x)H(x)+s(x) K(x), \label{tH thru H K short}\\
\t K(x,\epsilon) & = & r(x)K(x)+\alpha^2 s(x) H(x). \label{tK thru H K short}
\end{eqnarray}
We will prove the following relation for the coefficients $r(x)$, $s(x)$:
\beq \label{grads rs}
\nabla r(x)=A\nabla s(x).
\eeq
Then \eqref{grad tK} will be an immediate corollary of \eqref{tH thru H K short}, \eqref{tK thru H K short}, combined with \eqref{grad K} and \eqref{grads rs}. Indeed, we have:
\begin{eqnarray*}
\nabla \t H & = & r\nabla H+H\nabla r +s \nabla K+K\nabla s \\
& = &  r\nabla H+HA\nabla s +sA \nabla H+K\nabla s,\\
\nabla \t K & = &  r\nabla K+K\nabla r +\alpha^2 s \nabla H+\alpha^2 H\nabla s \\
& = &  rA\nabla H+KA\nabla s +\alpha^2 s \nabla H+\alpha^2 H\nabla s,
\end{eqnarray*}
and \eqref{grad tK} follows directly by virtue of $A^2=\alpha^2 I$.

Thus, it remains to prove \eqref{grads rs}. We have:
\begin{eqnarray}
-\tfrac{1}{3}r(x) & = & \dfrac{2(1-\epsilon^2 p(x))}{(1-\epsilon^2 p(x))^2-\epsilon^4\alpha^2 q^2(x) } \nonumber\\
 & = & \dfrac{1}{1-\epsilon^2 p(x)-\epsilon^2\alpha q(x) } +\dfrac{1}{1-\epsilon^2 p(x)+\epsilon^2\alpha q(x) },\\
 -\tfrac{1}{3}s(x) & = & \dfrac{2\epsilon^2 q(x)}{(1-\epsilon^2 p(x))^2-\epsilon^4\alpha^2 q^2(x) } \nonumber\\
 & = & \dfrac{\alpha^{-1}}{1-\epsilon^2 p(x)-\epsilon^2\alpha q(x) } -\dfrac{\alpha^{-1}}{1-\epsilon^2 p(x)+\epsilon^2\alpha q(x) }.
\end{eqnarray}
Therefore, by virtue of \eqref{grads pq}:
\begin{eqnarray}
-\tfrac{1}{3}\epsilon^{-2}\nabla r(x) & = & \dfrac{\nabla p(x)+\alpha \nabla q(x)}{\big(1-\epsilon^2 p(x)-\epsilon^2\alpha q(x)\big)^2 } 
                                           +\dfrac{\nabla p(x)-\alpha \nabla q(x)}{\big(1-\epsilon^2 p(x)+\epsilon^2\alpha q(x)\big)^2 }\nonumber\\
                 & = & \dfrac{(A+\alpha I) \nabla q(x)}{\big(1-\epsilon^2 p(x)-\epsilon^2\alpha q(x)\big)^2 } 
                                           +\dfrac{(A-\alpha I) \nabla q(x)}{\big(1-\epsilon^2 p(x)+\epsilon^2\alpha q(x) \big)^2},\nonumber\\                          
 -\tfrac{1}{3}\epsilon^{-2}\nabla s(x) & = & \dfrac{\alpha^{-1}(\nabla p(x)+\alpha \nabla q(x))}{\big(1-\epsilon^2 p(x)-\epsilon^2\alpha q(x)\big)^2 } 
                                           -\dfrac{\alpha^{-1}(\nabla p(x)-\alpha \nabla q(x))}{\big(1-\epsilon^2 p(x)+\epsilon^2\alpha q(x)\big)^2 }\nonumber\\
                 & = & \dfrac{\alpha^{-1}(A+\alpha I) \nabla q(x)}{\big(1-\epsilon^2 p(x)-\epsilon^2\alpha q(x)\big)^2 } 
                                           -\dfrac{\alpha^{-1}(A-\alpha I) \nabla q(x)}{\big(1-\epsilon^2 p(x)+\epsilon^2\alpha q(x) \big)^2}.\nonumber
\end{eqnarray}
Now \eqref{grads rs} follows by virtue of
\begin{align*}
& \alpha^{-1}A(A+\alpha I)=A+\alpha ^{-1}A^2=A+\alpha I, \\
& \alpha^{-1}A(A-\alpha I)=-A+\alpha^{-1}A^2=-(A-\alpha I).
\end{align*}
 This finishes the proof. 
\end{proof}


\section{Conclusions}

Completely integrable Hamiltonian systems lying at the basis of our constructions, seem to be worth studying on their own. In particular, their invariant surfaces are Abelian varieties appearing as intersections of two cubic hypersurfaces in the 4-dimensional space. Algebraic geometry of such surfaces does not seem to be elaborated very well in the existing literature. Still more interesting and intriguing are the algebraic-geometric aspects of the commuting pairs of integrable maps introduced here. Moreover, a fair part of our constructions seems to be generalizable into higher dimensions. All this will be the subject of our future research. 

\section{Acknowledgements}

This research is supported by the DFG Collaborative Research Center TRR 109 ``Discretization in Geometry and Dynamics''.


\end{document}